\def\BibTeX{{\rm B\kern-.05em{\sc i\kern-.025emF b}\kern-.08em
T\kern-.1667em\lower.7ex\hbox{E}\kern-.125emX}}
\newtheorem{lemma}{Lemma}
\begin{document}

\title{QF-MAC: Adaptive, Local Channel Hopping \\  for Interference Avoidance in Wireless Meshes}

\author{\IEEEauthorblockN{Yung-Fu Chen}
\IEEEauthorblockA{
%\textit{Dept. of Computer Science and Engineering} \\
\textit{The Ohio State University}\\
Columbus, OH, USA \\
chen.6655@osu.edu}
\and
\IEEEauthorblockN{Anish Arora}
\IEEEauthorblockA{
%\textit{Dept. of Computer Science and Engineering} \\
\textit{The Ohio State University}\\
Columbus, OH, USA \\
anish@cse.ohio-state.edu}
}

% \author{
% Yung-Fu Chen\affmark[\textdagger], Kenneth W. Parker\affmark[\textdaggerdbl], Mukundan Sridharan\affmark[\textdaggerdbl], Nathan Stohs\affmark[\textdaggerdbl], and Anish Arora\affmark[\textdagger,\textdaggerdbl]\\
% \affaddr{\affmark[\textdagger]\textit{Dept. of Computer Science and Engineering, The Ohio State University}, Columbus, OH, USA}\\
% \affaddr{\affmark[\textdaggerdbl]\textit{The Samraksh Company}, USA}\\
% \affaddr{chen.6655@osu.edu, kenneth.parker@samraksh.com, mukundan.sridharan@samraksh.com, \\nathan.stohs@samraksh.com, anish@cse.ohio-state.edu}
% }

% \IEEEoverridecommandlockouts
% \IEEEpubid{\makebox[\columnwidth]{978-1-7281-7374-0/20/\$31.00~
% \copyright2020
% IEEE \hfill} \hspace{\columnsep}\makebox[\columnwidth]{ }} 
\maketitle
% \IEEEpubidadjcol

\begin{abstract}
The throughput efficiency of a wireless mesh network with potentially malicious external or internal interference can be significantly improved by equipping routers with multi-radio access over multiple channels. For reliably mitigating the effect of interference, frequency diversity (e.g., channel hopping) and time diversity (e.g., carrier sense multiple access) are conventionally leveraged to schedule communication channels. However, multi-radio scheduling over a limited set of channels to minimize the effect of interference and maximize network performance in the presence of concurrent network flows remains a challenging problem. The state-of-the-practice in channel scheduling of multi-radios reveals not only gaps in achieving network capacity but also significant communication overhead. 

This paper proposes an adaptive channel hopping algorithm for multi-radio communication, QuickFire MAC (QF-MAC), that assigns per-node, per-flow ``local'' channel hopping sequences, using only one-hop neighborhood coordination. QF-MAC achieves a substantial enhancement of throughput and latency with low control overhead. QF-MAC also achieves robustness against network dynamics, i.e., mobility and external interference, and selective jamming attacker where a global channel hopping sequence (e.g., TSCH) fails to sustain the communication performance. Our simulation results quantify the performance gains of QF-MAC in terms of goodput, latency, reliability, communication overhead, and jamming tolerance, both in the presence and absence of mobility, across diverse configurations of network densities, sizes, and concurrent flows.
\end{abstract}

\begin{IEEEkeywords}
wireless mesh networks, medium access control, channel hopping, interference avoidance, jamming tolerance
\end{IEEEkeywords}

\section{Introduction}
Wireless mesh networks enable a cost-effective option for IoT connectivity in resource-, infrastructure- and access-limited settings, wherein device-to-device communications are achieved by multi-hop cooperation. The growing throughput-latency demands of 5G applications and the considerations of mobility and interference have motivated the development of performance-efficient networking platforms that can adapt to dynamic network environments. To that end, multi-channel multi-radio (MCMR) offers one such platform for high throughput-latency performance by increasing the degree of concurrent communication. This paper focuses on MCMR medium access control (MAC) protocols, which themselves play a key role in maximizing the performance in the presence of concurrent communication by scheduling transmissions in localities to avoid interference.
%The growing throughput-latency demands of 5G applications as well as considerations of emergent and malicious interference 
%developing efficient communication protocols to satisfy different QoS demands have still remained in applications of 5G, Internet of Things, military networks,  etc. As a key to meet the requirements of reliability and latency, various Medium Access Control (MAC) protocols have been proposed to mitigate the effect of interference in a consideration of reducing packet collisions or eschewing external noise.

%Time diversity (CSMA)

MCMR MACs have leveraged time diversity typically, i.e., using carrier sense multiple access with collision avoidance (CSMA/CA) to delay transmissions when needed or randomized transmission. In addition,  frequency diversity, i.e., channel hopping or orthogonal frequency-division multiplexing, is applied to deal with interference. However, the state-of-the-practice channel scheduling of MCMR networks has several gaps in achieving high throughput-latency performance. This is especially true when there are a number of concurrent traffic flows or when low latency is desired for bursty traffic.

As a representative example, let us consider the IEEE 802.15.4e Time Slotted Channel Hopping (TSCH) \cite{IEEE802.15.4} protocol. TSCH has found significant adoption in industrial low-power applications of wireless sensor-actuator networks (WSAN) and multiple industrial standards \cite{WirelessHART, ISA100, dujovne20146tisch}. As the demand for high data-rate services has grown, several TSCH-based protocols have been proposed for MCMR networks \cite{banik2017improving, baddeley20206tisch++}. 

The IEEE 802.15.4 standard leaves the choice of the channel sequence scheduler up to the protocol implementer. Scheduling may be realized in a centralized \cite{palattella2012traffic, soua2012modesa} or distributed \cite{accettura2013decentralized, soua2016wave} manner. However, channel hopping schedulers for TSCH typically rely on a globally fixed channel sequence. The sequence they use is relatively long, as the use of short sequences tends to be insufficient to mitigate interference, especially in the presence of concurrent traffic flows. The reliance on a large number of narrowband channels, in turn, yields a tradeoff with high spectrum utilization. Moreover, the fixed channel schedule underperforms in the presence of network dynamics. TSCH does not handle mobile scenarios effectively \cite{orfanidis2021tsch}, as the channel sequence does not adapt to the changes in network interference with the change in network topology. TSCH is also vulnerable to jamming attacks \cite{cheng2019cracking, cheng2021launching}, as an attacker can reverse engineer the channel hopping sequence of a radio through observation of channel activities and then launch selective jamming attacks against all outgoing packets on the hopping sequence channels.

\noindent
\textbf{Contributions of the paper.} In this paper, we design a control overhead efficient MAC scheduler that overcomes the limitations of TSCH but also other (e.g., CSMA/CA-based) schedulers under MCMR networks. Our protocol, which we call QuickFire MAC (or QF-MAC for short), uses a locally adaptive channel hopping sequence assignment to achieve reliable wireless communication with high throughput-latency performance in MCMR meshes, as follows.  

The key idea in QF-MAC is to use a per-node, per-traffic flow ``local'' channel hopping sequence. Only a short local sequence is needed, which yields improved throughput efficiency across a variety of traffic loads. The selection of the local sequence for a new flow at a node takes into account the sequences at potentially interfering nodes in the same flow and those of other flows going through that node. Notably, QF-MAC assigns its channel hopping sequences using only one-hop neighbor coordination. The net result is that each node efficiently avoids the end-to-end intra-flow interference from source to destination for the new flow and also its self-interference from the intersecting flows assigned to its other radios.

By itself, the assignment of channel sequences with local coordination is insufficient for dealing with network dynamics---whether they are due to adversarial jamming, external interference, internal interference across larger network regions, or node mobility. The second key idea of QF-MAC is to deal with all these dynamics by adapting the channel hopping sequence at each node to use only goodput-efficient channels. This adaptation also has low control overhead by leveraging passive interference estimation at each node. 

These two ideas yield a resilient MAC protocol compatible with different routing protocols and diverse network environments. QF-MAC makes few assumptions, i.e., about the number of channels, network density, node mobility, or the predictability of external or internal interference. 

%We have implemented QF-MAC and integrated it with a number of routing protocols for porting to different types of mesh radio networks, including recently developed SDR devices with 2-4 radios. 
We emulate real code implementations over simulated networks to validate the high performance of QF-MAC with a limited set of channels. We quantify its high goodput efficiency, low latency, low control overhead, and high packet reception ratio. Our validations span environments with and without adversarial jammers, with and without mobility, and a range of concurrent flows, network sizes, and densities. We also comparatively evaluate its improved jamming and mobility tolerance with respect to TSCH and CSMA/CA protocols in MCMR wireless meshes. %For space reasons, we present here only the results for a default routing protocol and omit discussion of the control overhead of the routing protocol and the relative merits of various routing protocols. 

The rest of the paper is organized as follows. In Section \Romannum{2}, we discuss related work, including TSCH. In Sections \Romannum{3} and \Romannum{4}, we formalize the scheduling problem and then present our solution for scheduling local channel hopping sequences and realizing channel adaptation with respect to internal and external interference. We present evaluation results in Section \Romannum{5}, and make concluding remarks in Section \Romannum{6}.

\section{Related Work}
\subsection{Time Slotted Channel Hopping}
TSCH is a channel hopping MAC protocol supported by IEEE 802.15.4e \cite{IEEE802.15.4}. It inherits slotted time access from the IEEE 802.15.4 standard. To achieve reliable performance in the presence of interference and multipath fading, it leverages multi-channel communication and channel hopping based on a synchronized slotframe. Each slotframe is a collection of slots with a fixed length that repeats over time; every slot is long enough to transmit a data packet and receive an acknowledgment between a pair of nodes within some transmission radius.  

Channel hopping in TSCH relies on channel offset schedules so as to avoid internal interference (and, in turn, communication collisions). The offset schedule is over a predefined sequence of channels, i.e., a globally shared sequence, instead of a randomly generated channel sequence. This is done primarily to reduce the channel synchronization overhead. The communication channel to be used for transmission and reception over a link at a slot follows the function $CHS$:
\begin{equation}
\begin{array}{l}
Channel = CHS[(ASN + OF) \ mod \ L],
\label{CHS}
\end{array}
\end{equation}
where $ASN$ stands for the absolute slot number, $OF$ is the channel offset of a communication link between two nodes, and $L$ denotes the length of a channel hopping sequence. The idea is to assign a different $OF$ to links that can potentially interfere.  
%In IEEE 802.15.4e, there are 16 distinct channels available for communications, so $L\!=\!16$ by default. But some of these channels may be blacklisted, say due to external interference, and $L$ can be less than 16. 
%TSCH also uses channel blacklisting to avoid external interference.
%Let $L$ be the length of the channel hopping sequence. 
Several heuristics \cite{hermeto2017scheduling} have been proposed to maximize the end-to-end reliability and minimize the end-to-end delay in TSCH MAC protocols, either by centralized or by distributed scheduling of a node's transmission time slots (time diversity) and the channel offset in Eq.~\ref{CHS} (frequency diversity). 

Even though TSCH-based protocols have shown good performance in industrial environments with stable topologies and low traffic, the reliance on a globally shared channel sequence limits their robustness to network dynamics. When $L$ becomes small, its ability to avoid interference is negatively impacted. Furthermore, TSCH's channel hopping is vulnerable to a selective jamming attacker, given its lack of channel adaption. Also, its use of a global channel hopping sequence inherently limits its resilience with respect to concurrent communications. The number of transmission links within an interference region must be at most $L$ to avoid collision\footnote{We assume there is no repeated channel in a channel hopping sequence.}. If $L$\!+\!1 links exist within the interference region, there exist at least two links that will collide predictably. In this case, the achievable capacity is reduced by at least $\frac{1}{L}$, which is more than the expected reduction when scheduling with random channel sequences.

\subsection{Channel Hopping in MCMR Networks}
Motivated by the spectrum efficiency gain they offer, MCMR mesh networks are increasingly being adopted in 5G contexts. The gain results primarily from the support of concurrent communications over orthogonal channels in MCMR networks. This support requires achieving rendezvous among communicating devices so that they are switched to the same channel at the same time. As illustrated by TSCH, channel hopping sequences provide a convenient basis for programming rendezvous synchronization, and so several channel hopping protocols for MCMR platforms have been proposed in the literature \cite{lin2013channel, yang2015fully, tan2019difference, lin2019homogeneous, chao2020adjustable}. However, most extant work in MCMR MAC channel hopping schedulers only considers internal interference avoidance under static networks. Our solution further provides reliability and survivability to the MCMR networks with dynamics of external interference, adversarial jamming, and node mobility.

Channel hopping sequences have also been exploited to improve the reliability of the network, essentially by amortizing the impact of channels with high interference or multipath fading over the set of channels. The authors of \cite{banik2017improving} accordingly use channel hopping sequences to extend the centralized Traffic Aware Scheduling Algorithm (TASA) \cite{palattella2013optimal}, to enable concurrent transmissions with interference avoidance in a slot-synchronized, IEEE 802.15.4, multi-hop network setting. In addition, a PHY design is proposed to support concurrent transmissions with IPv6 traffic within IEEE 802.15.4 networks \cite{baddeley20206tisch++}. 

%In slot synchronized networks where channel hopping techniques are exploited to provide reliability against interference and multipath fading, \cite{banik2017improving} extends Traffic Aware Scheduling Algorithm (TASA) \cite{palattella2013optimal} to allow multiple transmissions simultaneously with interference avoidance in IEEE 802.15.4 so as to enhance the throughput of multi-hop communication in a centralized way. In addition, a PHY design is proposed to support concurrent transmissions with IPv6 traffic within IEEE 802.15.4 networks \cite{baddeley20206tisch++}. 

Channel hopping has moreover been considered from a security perspective in MCMR meshes, more specifically, to enable jamming tolerance. The authors of \cite{khattab2008modeling} maximize network goodput in the presence of jamming attacks by combining channel-hopping with error-correcting codes (ECC) to lower the blocking probability over channels. Furthermore, in \cite{khattab2008jamming} they compare reactive and proactive channel hopping schemes and show that the reactive approach provides better anti-jamming performance than the proactive one in MCMR networks with respect to communication availability.

\section{Problem Statement}
The primary objective of MACs in MCMR mesh networks is to optimize the communication performance of nodes by minimizing the effect of interference within each node's interference region. Doing so impacts various performance metrics of multi-hop flow communications beyond throughput efficiency, reception ratio, latency, and control overhead.  

In what follows, we assume that each traffic flow corresponds to a simple path between a source and destination node comprising alternating network node radios and links. Without loss of generality, we let each flow be pinned to a specific radio at the nodes it traverses. (Our problem statement is readily extended to formulations where flows consist of a network of paths.) % Note that a node is equipped with multiple radios. Without loss of generality, for each data flow we assume the routing layer assigns corresponding links for the routing path(s), where each flow link is pinned to one transmission radios at a sender node and one reception radio at a receiver node. Then each flow link is assign with a channel frequency for  by the MAC protocol.

%In addition, in the performance measurement of a flow with multi-hop communication, an efficient interference avoidance scheme also implies a reliable flow transmission with high throughput, low retransmission, and low end-to-end latency.

%\subsection{Optimization of Channel efficiency}
In order to characterize the interference that results from a transmitting radio within its interference region and the efficiency of a channel hopping scheduler, we define a channel efficiency measure. Let $p_{i, ch, t}$ be the probability that radio $i$ at a node sends out a packet over channel $ch$ at time slot $t$ and $\Phi(i)$ denotes the number of internal and external interfering sources within radio $i$'s interference region. (Note that other radios in the same node as $i$ are among the sources that are within its interference region.) Given sender $i$ and its interference set $\Phi(i)$, the channel efficiency of channel $ch$ within $\Phi(i)$ at time slot $t$ is defined as follows:

%To understand the effect of interference initiated from any transmitting node within an interference region, channel efficiency \cite{li2010chameleon}, $e$, is used to measure the efficiency of a scheduler, $f$, assigning channels over slotframes to transmit valid traffic. Let $p_{i, ch, t}$ be the probability of outgoing packet at node $i$ over channel $ch$ at time slot $t$. With regard to a given sender $i$ and its interference set $\Phi(i)$, channel efficiency of node $i$ on channel $ch$ at time slot $t$ can be defined as $e_{\Phi(i), ch, t}$ of the probability of successfully transmitting one packet at a given slot:
\vspace*{-3mm}
\begin{equation}
\begin{array}{l}
e_{\Phi(i), ch, t} = \sum_{j\in \Phi(i)}{p_{j, ch, t} \prod_{k\in \Phi(i) \setminus \{j\}}{(1-p_{k, ch, t})}}
\label{CH_UTIL}
\end{array}
\vspace*{1mm}
\end{equation}
%$e_{\Phi(i), ch, t}$ is maximized when the aggregate traffic over channel $ch$ in $\Phi(i)$ at $t$, $\sum_{v\in \Phi(i)}{p_{v, ch, t}}$, equals 1. That is to say, at $t$ there is only one node with outgoing traffic transmitted over $ch$ within $\Phi(i)$ and thus no collisions happen. 

Let $f(i, t)$ be a channel hopping scheduling function that specifies the channel used at radio $i$ at slot $t$. Accordingly, the average scheduling efficiency, $\hat{E}$, may be calculated as follows:

%In channel sequence based networks, a channel scheduler function, $f(i, t)$, specifies the channel to be used for transmission at node $i$ at slot $t$. Accordingly, the average scheduling efficiency, $\hat{E}$, is defined as $ \hat{E}(f)= \frac{1}{T} \sum_{t=0}^{T-1} {\sum_{i \in V}{e_{\Phi(i), f(i, t), t}}}$, where $T$ is the total number of slots and $V$ is the set of all nodes. Note that in a channel sequence of length $L$ a channel is repeatedly used every $L$ slot. That means $f(i, t) = f(i, t')$, if $(t'-t) \ mod \ L = 0 $. Therefore, $\hat{E}$ can be reduced by only monitoring $L$ slots in a network with globally fixed sequence length $L$ :
\vspace*{-1mm}
\begin{equation}
\begin{array}{l}
\hat{E}(f)= \frac{1}{T} \sum_{t=0}^{T-1} {\sum_{i \in V}{e_{\Phi(i), f(i, t), t}}}
\label{AVG_UTIL}
\end{array}
\vspace*{1mm}
\end{equation}
where $T$ is the total number of slots, and $V$ is the set of radios in the network. 
{\em The channel scheduling problem is to determine a scheduler function $f$, such that $\hat{E}$ is maximized to minimize the interference}.

All notations related to the channel scheduling problem, and our solution QF-MAC, are listed in Table~\ref{notation}.

\begin{table}[t]
\caption{Notations for QF-MAC}
\begin{center}
\begin{tabular}{c|c}
\hline
\textbf{Symbol}&\textbf{Meaning} \\
\hline
$U$ & set of all channels\\
\hline
$\Delta$ & interference-to-reliable-transmission ratio\\
\hline
$C$ & number of radios at a node\\
\hline
$L$ & length of a channel hopping sequence\\
\hline
$f_x$ & flow id\\
\hline
$i\in V$ & radio id\\
\hline
$(i, j)\in E$ & link id\\
\hline
%$OF_{(i,j)}$ & channel offset of a link\\
%\hline
$ch\in U$ & channel id\\
\hline
$t$ & time slot number\\
\hline
$\Phi(i)$ & set of radios within an interference region\\
\hline
$p_{i,ch,t}$ &radio's probability of outgoing packet\\
\hline
$e_{\Phi(i),ch,t}$ &channel efficiency in an interference set\\
\hline
$CHS_{Tx/Rx}(f_x, (i, j))$ & Tx/Rx channel sequence over a flow link\\
\hline
$S_{\Phi(i)}$ & set of channels in all active Tx channel\\
 & sequences in an interference set\\
\hline
$ag_{\Phi(i),ch,t}$ &aggregate channel traffic in an interference set\\
\hline
$G(i,ch)$ & radio's goodput efficiency of over a channel \\
\hline
\end{tabular}
\label{notation}
\end{center}
\end{table}

\section{QF-MAC Protocol}
%The multi-hop TDMA scheduling problem is provably NP-complete for a graph network and its corresponding interference graph \cite{ergen2010tdma}. To support TSCH-based communication, several heuristics \cite{hermeto2017scheduling} have been proposed to maximize the end-to-end reliability and minimize the end-to-end delay, either by centralized or distributed scheduling of a node's transmission time slots (time diversity) and the channel offset in Eq.~\ref{CHS} (frequency diversity). Even though TSCH-based protocols have shown good performance in industrial environments with stable topologies andlow traffic, their mechanism of globally shared channel sequence limits flexibility with respect to network dynamics and robustness with respect to selective jamming.

Towards solving the optimization problem in Eq.~\ref{AVG_UTIL}, our design of QF-MAC begins by assigning, for each flow in the network, a channel sequence for each node associated with that flow (cf.~Section IV.B). Since each flow at a node is pinned to a unique radio, we can equivalently regard its corresponding channel sequence as being associated with the pinned radio. The channel sequence is chosen in coordination with the 1-hop neighbors in the flow, such that intra-flow interference is avoided on an end-to-end basis. Therefore, it allows for concurrent communication along all radios of the same flow (cf.~Section IV.B.I). It is moreover chosen such that it deterministically avoids conflicts with the other channel sequences assigned to the other radios in that node (cf.~Fig.~\ref{CHS} as well as Section IV.B.II).

\begin{figure*}[hbt!]
    \centering
     \subfigure
     {
        \includegraphics[width=0.88\textwidth]{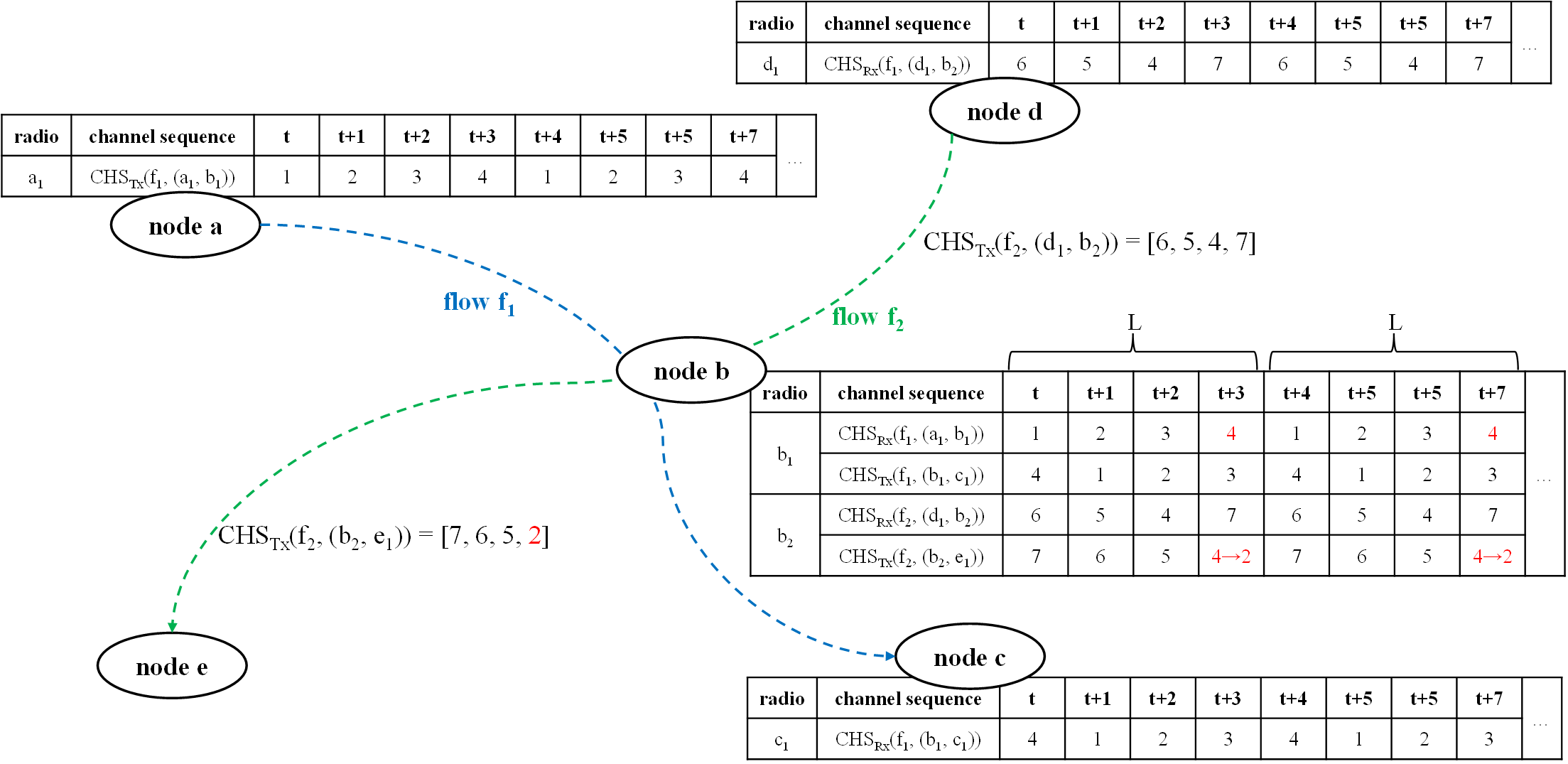}
    }
    \caption{Local channel sequences for two flows in a network. The example illustrates how QF-MAC avoids intra-flow and intersecting-flow interference locally. Assume flow $f_1$ is initiated at radio $a_1$ with route $(a_1, b_1, c_1)$ to assign Tx and Rx channel sequences in radio $a_1$, $b_1$, and $c_1$. Later flow $f_2$ is initiated at radio $d_1$ with route $(d_1, b_2, e_1)$ to send a Tx sequence, $CHS_{Tx}(f_2, (d_1, b_2))$, to radio $b_2$. Radio $b_2$ identifies the intersecting-flow conflict between $CHS_{Rx}(f_1, (a_1, b_1))$ and $CHS_{Tx}(f_2, (b_2, e_1))$ and then replaces $ch4$ in $CHS_{Tx}(f_2, (b_2, e_1))$ with a non-conflicted channel, $ch2$, which also avoids intra-flow interference in both $f_1$ and $f_2$. }
    \label{CHS}
\vspace*{3mm}
\end{figure*}

The design of QF-MAC also indirectly deals with internal interference with other nodes, even potentially malicious external interference effects, by adapting the flow sequences as follows. Each node locally updates each of its channel sequences in the presence of varying interference by changing goodput-inefficient channels such that the scheduling efficiency, $\hat{E}$, is optimized over time for the network (cf.~Section IV.C). Goodput-inefficient channels are identified by leveraging a passive local interference estimation mechanism. Three forms of channel sequence update are performed to improve goodput efficiency--channel augmentation (by adding a channel), channel migration (by substituting a channel), and channel reduction (by dropping a channel).
%%%AA:  EXPLAIN HOW

%All QF-MAC related notations are listed in Table~\ref{notation}.

%QF-MAC offers a channel scheduling algorithm that cooperates with TDMA protocols and multi-hop, multipath routing by assigning a local channel sequence that enjoys the properties of end-to-end reliability and low control overhead. Furthermore,  QF-MAC also locally estimates channel efficiency to dynamically adjust channel sequences, so as to adapt to varying internal and external interference or jamming. All of its related notations are listed in Table~\ref{notation}.

\subsection{Choosing the Sequence Length for Interference Avoidance}
Two factors contribute to characterizing the minimum length of any channel hopping sequence. First, since nodes with $C$ radios can support up to $C$ concurrent transmissions at a node, to avoid self-interference at nodes where concurrent flows intersect, which we refer to as ``intersecting-flow'' interference, the sequence length $L$ must be at least $C$. The second factor results from our design wherein we seek to pipeline flows to maximize concurrent transmissions and, thereby, throughput efficiency. To that end, we assume a standard unit-disk interference model \cite{von2005robust}. We let $\Delta$ denote the ratio of interference distance to the reliable transmission distance, and $R_{i}$ and $R_{t}$ denote the interference radius and reliable transmission radius; $\Delta = \lceil \frac{R_{i}}{R_{t}} \rceil$. It follows that to allow concurrent communication over $2\Delta\!+\!1$ channels while avoiding intra-flow interference within the interference region, the number of distinct channels in $L$ should be at least $2\Delta\!+\!1$.    
%Internal interference can be intra-flow (i.e., within a flow) or inter-flow (across flows, whether or not they intersect at common nodes). We assume a standard unit-disk interference model: Let $\Delta$ be the ratio of interference distance to reliable transmission distance, and $R_{i}$ and $R_{t}$ denote the interference radius and reliable transmission radius; $\Delta = \frac{R_{i}}{R_{t}}$. To avoid intra-flow interference within an interference region, the number of distinct channels in $L$ should be at least $2\Delta+1$ to allow concurrent communication over $2\Delta+1$ channels.  

Hence, the minimum channel hopping sequence length to avoid interference is stated in Eq.~\ref{SEQ_LEN}, where $U$ is the set of all channels. 

\vspace*{-1mm}
\begin{equation}
\begin{array}{l}
|U| \geq L \geq max(2\Delta+1, C)
\label{SEQ_LEN}
\end{array}
\vspace*{1mm}
\end{equation}
A minimum value of $L$ is then applied to all radios to minimize the control and computation overhead. 

\subsection{Local Assignment of Channel Hopping Sequences}
In an accompanying theoretical analysis, relegated to Appendix~\ref{FirstAppendix}, that considers concurrent flows within the interference region of any node, we show that the ratio of collisions/slot of per-flow sequence schedulers (i.e., QF-MAC) is better than that for global sequence schedulers (i.e., TSCH), given $L \leq |U|$. Based on this insight, we design a scheme to assign per-flow channel sequences that notably require only one-hop neighbor coordination.  

Assume that a flow has been admitted to the network by choosing its potentially multi-hop routing path and assigning a radio in each of the path nodes to the flow, yielding a route of say  $(v_{1}, ..., v_{d})$ with $d\!-\!1$ links. Starting with radio $v_1$, QF-MAC calculates and assigns local channel sequence for the radios in the chosen route; for radio $v_r$, with $1 < r \leq d$, QF-MAC does so with Algorithm~\ref{CHANNEL_SEQ_ASSIGN}. Its channel sequence assignment on $v_{r}$ minimizes the interference both from all radios associated with the same flow within $\Phi(v_{r})$ (intra-flow interference) and from all radios at a node (intersecting-flow interference). QF-MAC then sends a channel sequence request to the successor $v_{r+1}$ over the control channel and subsequently starts communication over the link $(v_r, v_{r+1})$ using the channel sequence of $v_r$. All control messages in QF-MAC are sent via a predefined control channel so that no additional scheduling overhead is involved. 
%In multi-hop end-to-end communication,  a channel sequence for a link is scheduled by the sender whenever a flow is initiated. A flow is associated with at least one routing path 
%For each link $(v_r, v_{r+1})$, node $v_r$ needs to schedule the local channel sequence and send a channel sequence request to $v_{r+1}$ before starting the data communication over the link. All control messages are sent via a predefined control channel so that no additional scheduling overhead is involved. 

%Note that in multi-radio networks different flows may include the same link $(v_r, v_{r+1})$ if ${v_r}$ and $v_{r+1}$ have sufficiently many radios to share. In this case node $v_r$ will schedule multiple channel sequences over distinct radios on the same link.  

%To assure communication performance over $(v_r, v_{r+1})$, ${v_r}$ assigns the channel sequence so as to minimize interference from $u \in \{v_1, ..., v_{v-1}, v_{v+1}, ..., v_d\}$ within the interference set $\Phi({v_r})$ (intra-flow interference) and interference among radios at ${v_r}$ (intersecting-flow interference).

%\vspace*{-4mm}
%\pagebreak
\noindent
\textbf{I. Intra-flow Interference Avoidance.} 
When a flow $f_x$ is initiated given a route $(v_1, ..., v_r, ..., v_d)$, the source radio $v_1$ first randomly selects a channel sequence $CHS_{Tx}(f_x, (v_1, v_2))$ consisting of $L$ non-repeating random channels for link $(v_1, v_2)$ and then sends the channel sequence to the next hop, $v_2$. This is shown in Lines 3 to 5 of Algorithm~\ref{CHANNEL_SEQ_ASSIGN}. Upon receiving the channel sequence from $v_{r-1}$, as shown in Line 7, $v_r$ sets up the scheduling for data reception over $(v_{r-1}, v_r)$. If $v_r$ has an outgoing link for $f_x$, as shown in Lines 8 to 9, it then adopts by default the channel sequence for the transmissions on link $(v_r, v_{r+1})$ right shifted by one position, according to Algorithm~\ref{RIGHT_SHIFT}, so as to avoid intra-flow interference between the two links $(v_{r-1}, v_r)$ and $(v_r, v_{r+1})$. Along the flow path, the intra-flow interference is thus successively avoided by repeating this assignment until the destination $v_d$ is reached.\footnote{For paths with a round shape, intra-flow interference may result if $v_r$ and $v_{r+L}$ are within $\Phi(v_r)$ and $(OF_{v_r} \! - \! OF_{v_{r+L}}) \ mod \ L = 0$. We handle this case implicitly via the inter-flow interference handling Channel Adaptation mechanism, described in Section~IV.C.}
%When a flow is initiated given a route $(v_1, ..., v_d)$, the source radio $v_1$ first randomly selects a channel sequence $(ch_1, ..., ch_{L})$ with an offset $OF_{(v_1,v_2)}$ for link $(v_1, v_2)$ and then sends the channel sequence to the next hop, $v_2$. Upon receiving this information, $v_2$ sets up the scheduling for data reception over $(v_1, v_2)$. It then adopts by default the same channel sequence for the transmissions on link $(v_2, v_3)$, but with $OF_{(v_2, v_3)} := OF_{(v_1, v_2)}+1$, so as to avoid intra-flow interference between the two links $(v_1, v_2)$ and $(v_2, v_3)$. Along the flow path, the intra-flow interference can thus be successively avoided by repeating this assignment until the destination $v_d$ is reached.\footnote{For paths with loops, intra-flow interference may result if $v_r$ and $v_{r+L}$ are within interference region and $(OF_{v_r} \! - \! OF_{v_{r+L}}) \ mod \ L = 0$. We handle this case implicitly via the inter-flow interference handling Channel Adaptation mechanism, which is described in Section~IV.C.}

As will be described in Section~IV.B.II, a radio $v_r$ may need to change its channel sequence from the default one to accommodate channel sequence conflicts from the other flows through its node. 
%However, for multi-flow communication, a node $v_r$ associated with more than one path may locally choose a channel sequence of $(v_r, v_{r+1})$ different from the sequence used for $(v_{r-1}, v_r)$ to avoid channel collision among its radios.  
To avoid intra-flow interference in this case as well, the channel sequence request sent to $v_{r+1}$ includes not just the channel sequence of $v_r$ but also the sequence information of $v_r's$ $\Delta\!\!-\!\!1$-hop ancestors\footnote{QF-MAC allows each radio to change its Tx channel sequence locally to avoid intersecting-flow interference. The update is nontrivial since $v_r's$ $\Delta\!\!-\!\!1$-hop ancestors can use different channel sequences.}. Accordingly, $v_{r+1}$ avoids intra-flow interference using the potentially different channel sequences being used within its interference region in the flow. We observe in passing that a feasible assignment for avoiding intra-flow and intersecting-flow interference exists as long as the assumption that $L$ satisfies Eq.~\ref{SEQ_LEN} holds.      
%With that condition, a node with a routing path can always find an available sequence free from intra-flow interference and intersecting-flow interference.
%Yet, to be interference-free in a network, the minimal number of channels should be at least $max(2\Delta+1, C\Phi)$, where $C\Phi$ is the number of total radios within an interference region. Since $\Phi$ can be large for dense networks.

\begin{algorithm}[t]
Input: flow id $f_x$, incoming flow link $(v_{r-1}, v_r)$ with channel sequence $CHS_{Tx}(f_x, (v_{r-1} ,v_r))$,  outgoing flow link $(v_r, v_{r+1})$ \\
%set channel sequences $A = \{CHS_{Tx}(f_x, (i, j)) | (i, j) \in \{(v_{r-2}, v_{r-1}), ..., (v_{r-\Delta},  v_{r-\Delta+1}) \}\}$,
Output: channel sequence $CHS_{Tx}(f_x, (v_r, v_{r+1}))$\\
\SetAlgoLined
    \eIf{$v_r$ = src($f_x$)}{
       $CHS_{Tx}(f_x, (v_r, v_{r+1}))$ :=  $L$ non-repeating random channels selected from  $U$\;  
	schedule send of $CHS_{Tx}(f_x, (v_r, v_{r+1}))$ to $v_{r+1}$\;
  }{
    $CHS_{Rx}(f_x, (v_{r-1} ,v_r))$ :=  $CHS_{Tx}(f_x, (v_{r-1} ,v_r))$\;
    \If{$v_r$ != dest($f_x$)}{
      $CHS_{Tx}(f_x, (v_r, v_{r+1}))$ :=  $RIGHT{-}SHIFT(CHS_{Tx}(f_x, (v_{r-1} ,v_r)), 1)$\;
    }

    \eIf{($\exists \ l, f_y$ s.t. $CHS_{Rx}(f_x, (v_{r-1} ,v_r))[l]$ = $CHS_{Rx}(f_y, (v_q, v_r))[l]$)}{
      schedule send of reject of $CHS_{Tx}(f_x, (v_{r-1} ,v_r))$ to $v_{r-1}$\;
    }{
      \If{$v_r$ $\neq$ dest($f_x$) $\wedge$ ($\exists \ l, f_y$ s.t. $CHS_{Tx}(f_x, (v_r, v_{r+1}))[l]$ = $CHS_{Rx}(f_y, (v_q, v_r))[l]$)}{
        replace $CHS_{Tx}(f_x, (v_r, v_{r+1}))[l]$ with a channel without conflict\;
      }
      \If{$v_r$ $\neq$ dest($f_x$) $\wedge$ ($\exists \ l, f_y$ s.t. $CHS_{Tx}(f_x, (v_r, v_{r+1}))[l]$ = $CHS_{Tx}(f_y, (v_r, v_s))[l]$)}{
        replace $CHS_{Tx}(f_x, (v_r, v_{r+1}))[l]$ with a channel without conflict\;
      }
      \If{($\exists \ l, f_y$ s.t. $CHS_{Rx}(f_x, (v_{r-1} ,v_r))[l]$ = $CHS_{Tx}(f_y, v_r, v_s)[l]$)}{
        replace $CHS_{Tx}(f_y, (v_r, v_s))[l]$ with a channel without conflict\;
        schedule send of $CHS_{Tx}(f_y, (v_r, v_s))$ to $v_s$\;
      }
    }
    
    schedule to send $CHS_{Tx}(f_x, (v_r, v_{r+1}))$ to $v_{r+1}$\;
  }
\caption{CHANNEL-SEQ-ASSIGN at radio $v_r$}
\label{CHANNEL_SEQ_ASSIGN}
\end{algorithm}

\begin{algorithm}
\SetAlgoLined
  \For{$l = 0 .. L-1$}{
     $CHS'[(l+of\!fset) \% L]$ := $CHS[l]$\;
  }
   \Return $CHS'$\;
\caption{RIGHT-SHIFT(CHS, offset)}
\label{RIGHT_SHIFT}
\end{algorithm}

%\vspace*{2mm}
\noindent 
\textbf{II. Intersecting-flow Interference Avoidance.} 
Since each node is equipped with $C$ full-duplex radios, it can perform up to $C$ concurrent Tx+Rx communications. Each radio $v_r$ associated with flow $f_x$, has its incoming Rx channel sequence, $CHS_{Rx}(f_x, (v_{r-1} ,v_r))$, and its outgoing Tx channel sequence, $CHS_{Tx}(f_x, (v_r, v_{r+1}))$.  To avoid interference with the flows at other active radios at the same node, QF-MAC checks if any channel in $CHS_{Rx}(f_x, (v_{r-1},v_r))$ or  $CHS_{Tx}(f_x, (v_r, v_{r+1}))$ conflicts with some other radio's Tx/Rx channel sequences. 
%its Tx channel sequence and its Rx channel sequence. In channel sequence scheduling over a routing path  $(v_1, ..., v_d)$, a relay node $v_r$ locally schedules both Rx channel sequence for link $(v_{r-1}, v_{r+1})$ and Tx channel sequence for link $(v_r, v_{r+1})$. 
In order to save control overhead, QF-MAC allows $v_r$ to locally change only its Tx channel sequence of the link $(v_r, v_{r+1})$ to avoid interference; the net result is that the channel sequence scheduling of a path continues to be completed in a downstream direction. 
%After setting up the Rx and Tx channel sequences of flow $f_x$ at radio $v_r$, it further checks if any channel in $CHS_{Rx}(f_x, (v_{r-1} ,v_r))$ and  $CHS_{Tx}(f_x, (v_r, v_{r+1}))$ conflicts with other radios' Tx/Rx channel sequences. 

The change is accomplished as follows: As shown in Lines 11--24 in Algorithm~\ref{CHANNEL_SEQ_ASSIGN}, four types of conflicts are considered: $(Rx_{f_x}, Rx_{f_y})$, $(Rx_{f_x}, Tx_{f_y})$, $(Tx_{f_x}, Rx_{f_y})$, and $(Tx_{f_x}, Tx_{f_y})$ conflicts, wherein the left and right terms of a tuple stand respectively for the newly scheduled sequences of flow $f_x$ and the existing sequence from other radios associated with flow $f_y$. 
Lines 11--12 specify the handling of the $(Rx_{f_x}, Rx_{f_y})$ conflict case: Since $v_r$ is not allowed to modify the Rx channel sequence of $(v_{r-1}, v_r)$, it sends a reject with its channel sequence scheduling information to $v_{r-1}$ so that $v_{r-1}$ can reschedule a channel sequence for link $(v_{r-1}, v_r)$. 
In the $(Tx_{f_x}, Rx_{f_y})$ conflict case, as shown in Lines 14--15, and the $(Tx_{f_x}, Tx_{f_y})$ conflict case, as shown in Lines 17--18, the conflicted channel in $CHS_{Tx}(f_x, (v_r, v_{r+1}))$ is replaced directly to exclude the collision. Note that the replacement should exclude all the conflicted channels used in both the other radios at the node itself and the $\Delta\!\!-\!\!1$-hop ancestors of flow $f_x$. As shown in Lines 20--22, the case of $(Rx_{f_x}, Tx_{f_y})$ is resolved by changing the Tx channel sequence of $f_y$ and then sending the updated channel sequence request to the corresponding next hop, $v_s$. 

Note that an update of $CHS_{Tx}(f_y, (v_r, v_s))$ may cause intra-flow interference at the radios associated with $f_y$. This case is handled implicitly via the inter-flow interference handling in Channel Adaptation (cf.~Section~IV.C).

\subsection{Channel Adaptation}
Unlike intra-flow and intersecting-flow interference, the variation of inter-flow and external interference across node locations and time is not controlled by QF-MAC. QF-MAC does, however, adapt to these interference dynamics to optimize channel efficiency over time. We recall that Eq.~\ref{CH_UTIL} states that channel efficiency $e_{\Phi(i), ch, t}$ is optimized when $\sum_{v\in \Phi(i)}{p_{v, ch, t}}=1$\footnote{Namely, in channel hopping schedulers, only one radio transmits a packet over $ch$ at $t$ within $\Phi(i)$.}. Moreover, we recall that \cite{li2010chameleon} shows that $\sum_{v\in \Phi(i)}{p_{v, ch, t}}$ can be approximated by a local estimation of $p_{i, ch, t}+e^{I(ch, i, t)}$,  where $I(ch, i, t)$ is an interference estimator of the probability that some interferers $j\in \Phi(i) \setminus \{i\}$ transmit on channel $ch$ at slot $t$. And furthermore, IEEE 802.11 and IEEE 802.15.4 utilize the evaluation of clear channel access (CCA) in the physical layer to detect the level of interference, $I(ch, i, t)$. All of these observations make it feasible in QF-MAC to have each radio $i$ locally and efficiently estimate channel efficiency $e_{\Phi(i), ch, t}$, by measuring incoming traffic, $p_{i, ch, t}$, and the interfering traffic on channel $ch$. Notably, this estimation is achieved without introducing any extra communication overhead.

By adopting the use of packet acknowledgments for transmissions, QF-MAC also calculates the goodput efficiency, $G(i, ch)$, in terms of the ratio of ACKed packets to transmitted packets at radio $i$ on channel $ch$, and thereby characterizes the effect of external interference. In turn, it uses the estimates of the channel efficiency and the goodput efficiency to mitigate inter-flow and external interference, as follows.

\vspace*{2mm}
\noindent
\textbf{Inter-flow and External Interference Avoidance.} Let $S_{\Phi(i)}$ be the set of channels used in all active Tx channel sequences from radios in $\Phi(i)$, $|U| \! \geq \! |S_{\Phi(i)}| \geq L$\footnote{$S_{\Phi(i)}$ can be estimated through monitoring CCA signals over all channels at radio $i$.}. Also, let $ag_{i,ch,t} = \sum_{j\in \Phi(i)}{p_{j, ch, t}}$ denote the aggregate traffic in channel $ch$ in the interference set of radio $i$ at slot $t$,  It follows that $\sum_{ch \in S}{ag_{\Phi(i),ch,t}} \! > \! |S_{\Phi(i)}|$ implies that in $\Phi(i)$ the aggregate traffic load from all channels in $S_{\Phi(i)}$ exceeds the achievable capacity. QF-MAC improves the transmission performance at $i$ by introducing a channel out of $S_{\Phi(i)}$ to replace the one with the largest $ag_{\Phi(i), ch, t}$ in all active Tx channel sequences at the node. On the other hand, $\sum_{ch \in S}{ag_{\Phi(i), ch, t}} \leq |S_{\Phi(i)}|$ and a poor value of $G(i, ch)$ reveal an increased impact from external interference on a particular  channel. QF-MAC selects a new channel for radio $i$ from $S_{\Phi(i)}$ to replace the channel with the lowest $G(i, ch)$ among all of its Tx channel sequences. Algorithm~\ref{CHANNEL_ADAPT} specifies the rules of channel adaptation with respect to three cases: channel augmentation, channel migration, and channel reduction.

%\textbf{Inter-flow and External Interference Avoidance.} Let $S_{\Phi(i)}$ be the set of channels used in the Tx channel sequences at radio $i$, $|U| \geq |S_{\Phi(i)}| \geq L$\footnote{$S_{\Phi(i)}$ can be estimated through monitoring CCA signals over all channels at radio $i$.}. Also, let $ag_{\Phi(i),ch,t} = \sum_{v\in \Phi(i)}{p_{v, ch, t}}$ denote the aggregate traffic in channel $ch$ in the interference set of radio $i$ at slot $t$,  It follows that $\sum_{ch \in S}{ag_{\Phi(i),ch,t}} > |S_{\Phi(i)}|$ implies that in $\Phi(i)$ the aggregate traffic load from all channels in $S$ exceeds the achievable capacity. QF-MAC improves the transmission performance at $i$ by introducing a channel from $S_{\Phi(i)}$ to replace the one with the largest $ag_{\Phi(i),ch,t}$ in its Tx channel sequence. On the other hand, if $\sum_{ch \in S}{ag_{\Phi(i),ch,t}} \leq |S_{\Phi(i)}|$ and a poor value of $G(i, ch)$ reveals an increased impact from external interference, QF-MAC selects a new channel for radio $i$ from either $S_{\Phi(i)}$ or $U$ to replace the channel with the lowest $G(i, ch)$ among all of its Tx channel sequences. Algorithm~\ref{CHANNEL_ADAPT} specifies the rules of channel adaptation with respect to three cases: channel augmentation, channel migration, and channel reduction.

\begin{algorithm}[t]
\SetAlgoLined
    \eIf{$\sum_{ch \in S}{ag_{\Phi(i),ch,t}} > |S_{\Phi(i)}|$}{
      $ ch_{old} = \operatorname*{argmax}_{ch \in S_{\Phi(i)}} ag_{\Phi(i),ch,t}$\;  
      $X:= U \setminus S_{\Phi(i)}$\;
      select a random $ch_{new} \in X$\;
      replace $ch_{old}$ by $ch_{new}$ in $\{CHS_{Tx}(i, j) \ | \  ch_{old} \in CHS_{Tx}(i, j) \}$\;
	send updated $CHS_{Tx}(i, j)$ to  $j$\;
  }{
    \uIf{$(\exists \ ch \in S_{\Phi(i)}: G(i, ch) < \theta_1)$}{
      $ ch_{old} = \operatorname*{argmin}_{ch \in S_{\Phi(i)}} G(i, ch)$\;  
      $X:= $ all non-conflicting channels in $S_{\Phi(i)}$\;
    }
     \uElseIf{$\sum_{ch \in S_{\Phi(i)}}{ag_{\Phi(i),ch,t}} < \theta_2|S_{\Phi(i)}|$}{
      $ ch_{old} = \operatorname*{argmin}_{ch \in S_{\Phi(i)}} ag_{\Phi(i),ch,t}$\;  
      $X:= $ all non-conflicting channels in $S_{\Phi(i)}$\;
      
    }
    select $ch_{new} \in X$ with probability  $\frac{ag_{\Phi(i),ch_{new},t}^{-1}}{\sum_{ch \in X}{ag_{\Phi(i),ch,t}^{-1}}}$\;
    replace $ch_{old}$ by $ch_{new}$ in $\{CHS_{Tx}(i, j) \ | \ ch_{old} \in CHS_{Tx}(i, j) \}$\;
    send updated $CHS_{Tx}(i, j)$ to $j$\;
  }
\caption{CHANNEL-ADAPT($i$)}
\label{CHANNEL_ADAPT}
\end{algorithm}

{\em Channel augmentation}, as shown in Lines 1--6, is triggered whenever the aggregate traffic load over all channels in ${S_{\Phi(i)}}$ is larger than the achievable capacity. In Line 2, radio $i$ selects for replacement a channel $ch_{old}$ whose traffic within its interference region is the largest.  Note that $ag_{\Phi(i),ch_{old},t}$ is always greater than 1, which implies a high level of inter-flow interference and hence the likelihood that the substitution of $ch_{old}$ at radio $i$ will improve the performance at radio $i$. In Lines 3--4, a new channel from $U\setminus S_{\Phi(i)}$ is introduced to increase the number of channels used for concurrent transmission in $\Phi(i)$ to reduce the internal interference. 
%A channel with minimal traffic load has the highest probability of being chosen. Instead of deterministic selection, we adopt a probabilistic rule that eschews the situation where nodes with similar estimated states will simultaneously swap to the same channel, thereby only increasing the collision rate. 
In Lines 5--6, the newly selected channel replaces $ch_{old}$ in all Tx channel sequences used in radio $i$, and then it notifies the corresponding neighbor about the update of the channel sequence.

{\em Channel migration}, as shown in Lines 8--10, swaps out the most goodput-inefficient channel from Tx sequences. Note that Line 8 captures the case with short-term interference dynamics--i.e., $ag_{\Phi(i), ch, t} \leq |S_{\Phi(i)}|$--but where at least one channel has goodput below the threshold of $\theta_1$. In Lines 9--10, radio $i$ replaces the channel, $ch_{old}$, with minimal goodput efficiency with a new channel from $S_{\Phi(i)}$, selected in Line 14, that does not conflict any other channels in radio $i$'s Tx and Rx channel sequences at $t$. 
The probability of selecting a channel is inversely proportional to its aggregate traffic. Instead of deterministic selection, we adopt a probabilistic rule that eschews the situation where nodes with similar estimated states will simultaneously swap to the same channel. Then the update of Tx channel sequences, in Lines 15--16, is sent to the receiver $j$ of link $(i, j)$.

%\footnote{Since radio $i$  has the statistics of goodput efficiency for only the channels in $S_{\Phi(i)}$, the aggregate traffic $ag_{\Phi(i), ch, t}$ is used to select $ch_{new}$ from either $S_{\Phi(i)}$ or $U \setminus S$.}

{\em Channel reduction}, as shown in Lines 11--13, shrinks the size of $S_{\Phi(i)}$ if the average channel efficiency is below a threshold $\theta_2$. Note that channel efficiency is maximized when $ag_{\Phi(i), ch, t} = 1$. Radio $i$ reduces the size of $S_{\Phi(i)}$ to be close to $L$ when the average data traffic within its interference region is low. This further reduces the overhead needed to accurately monitor channel states in $S_{\Phi(i)}$. %As reduction lowers overhead for sampling interference in channels in $S_{\Phi(i)}$, radio $i$ can minimize the sampling overhead when $|S_{\Phi(i)}|$ decreases to $L$.

\subsection{Compatibility with Routing Protocols}
QF-MAC works not only for path-oriented routing (i.e., on-demand routing) but also with table-driven routing protocols. In table-driven routing, the routing path of each packet may vary as the routing metrics change over time. When the path changes, QF-MAC deals with resulting changes as follows: If radio $i$'s next forwarder changes from $j$ to $k$ in its routing table, whenever the first packet arrives at $i$ after this change, radio $i$ sends a channel sequence request to $k$ and a channel sequence cancel to $j$, respectively, with only one-hop communication. Moreover, for the intra-flow interference from nodes in the flow at more than one hop distance, it effectively treats that as inter-flow interference: it spontaneously performs channel adaptation based on the estimation of the aggregate outgoing load and the interfering traffic in an interference region. 

\section{Performance Evaluation}
\subsection{Configuration Space of Simulated Networks}
Our validation of QF-MAC uses the ns3 simulator, which we extended in several ways: to support MCMR communication at each node, to support jammer nodes, and to emulate our real-code implementation of the MAC. (The extension to support emulation is based on an integration of ns-3 with the Direct Code Execution framework \cite{NS3DCE}). The validation compares the performance of QF-MAC with that of TSCH and CSMA/CA, in terms of goodput, end-to-end latency, control overhead, and packet reception ratio. It does so for environments with and without external interference and node mobility.  To calibrate the impact of leveraging local interference estimation and channel adaptation in tolerating adversarial jamming, we simulate two versions of QuickFire MAC: one with channel adaptation (QF-MAC-A) and the other without (QF-MAC). 

All MACs are tested with point-to-point flows whose routes are, in all cases, set up and maintained via a common reactive path-oriented routing protocol that finds a minimum interference path over the data plane for each flow. 
%\footnote{Our comparative evaluation with other (proactive and geographic) routing protocols shows qualitatively similar results and is omitted for reasons of space.} 

% We validated a real-code implementation of QuickFire MAC in a ns3-dce framework for MCMR communication and compare the performance of channel hopping scheduler in QuickFire MAC and TSCH in terms of goodput, end-to-end latency, control overhead, and flow reception ratio under an environment with internal and external interference. QuickFire MAC is evaluated in two classes, local channel hopping sequence scheduling with/without channel adaptation  (QF-MAC/QF-MAC-A), to demonstrate the effect of leveraging local interference estimation in jamming tolerance. Note that QF-MAC, QF-MAC-A, and TSCH are coupled with a reactive path-oriented routing protocol that perform bounded exploration to find a minimum interference path for end-to-end communication.

In the MCMR meshes we simulate, each node is equipped with four radios ($C=4$) that share a fixed capacity of 8 Mbps. For QF-MAC and TSCH, each radio operates in one of 8 channels, each with 1Mbps capacity. One channel is dedicated to sharing control messages (for routing and MAC); the remaining seven channels are used for data communication, i.e., $|U|=7$. For CSMA, each node operates with one dedicated channel of 1 Mbps for the control plane and two channels of 3.5 Mbps for data plane communication; when serving more than two flows, each radio uses round-robin scheduling for the flows. We note that we have used a modest number of channels and radios per node in part because that represents the state-of-the-art but also because it conveniently lets us study the ability of these MACs to handle interference. For the data plane, we apply a time-slotted MAC communication with a 10 ms slot for QF-MAC and TSCH and a 3 ms slot for CSMA. Every slot is long enough to transmit a 1000-byte data packet and receive an acknowledgment between a pair of nodes within some transmission radius.

The length of the channel sequence is set to its minimal value, 4, 
for QF-MAC and the maximal value (without repeated channel), $|U|=7$, for TSCH.  These assignments of $L$ satisfy Eq.~\ref{SEQ_LEN} while allowing for reduced control overhead for QF-MAC and the best opportunity for TSCH to avoid interference.   

Concerning interference, each radio is configured with the same transmission and interference radius of 1km ($\Delta=1$) followed by a unit-disk interference model \cite{von2005robust}. And with respect to simulation scenarios with external interference, a single adversarial jammer with a +3 dB above the nominal transmission power and a 100\% effective jamming range of 1.32km is placed in the center of the network layout. Outside of this radius, jamming contributes to node-local RF noise conditions and could still cause blocking interference depending on the SNR margin of the potential victim link, just as any other simulated RF emitter would. The jammer continually generates jamming signals to reduce the network data capacity by 2Mbps (i.e., from 7Mbps to 5Mbps) within its interference region, as follows:  For QF-MAC and TSCH, it jams the two channels with the highest aggregate traffic. 
%(Note that the maximum network capacity within a jamming range is thus effectively reduced from 7Mbps to 5Mbps.)
For CSMA, it jams over a sub-band within one of the two 3.5 Mbps channels to reduce the available capacity by 2 Mbps.

%In the experimented MCMR meshes, each node is equipped 4 radios ($C=4$) operating on 8 channels with 1Mbps bandwidth per channel, where 1 channel is dedicated for control messages of routing and MAC stacks and the remaining 7 channels ($|U|=7$) are used for data communication. Each node is configured with the same transmission and interference radius of 1km ($\Delta=1$).  The minimal value, 4, is applied to the length of channel sequence ($L=4$) according to Eq.~\ref{SEQ_LEN}. Furthermore, in each experiment scenario, an adversarial jammer, with a 100\% effective jamming range of 1.32km, is placed in the center of the network layout that continually generates jamming signals to the two channels with the most aggregate traffic within its interference region. Note that the maximum capacity within a jamming range is then reduced from 7Mbps to 5Mbps. 

The space of experiments we conducted is 4-dimensional, namely: (i) network size in $\{64, 125, 216\}$, (ii) network density\footnote{Density is defined as the average number of nodes per $R_t^2$ area, with $R_t$ denoting the transmission range} in $\{\sqrt[3]{3}, 2, 3, 4, 5\}$, (iii) number of concurrent 3 Mbit flows over in $\{1, 3, 7, 10\}$ , and (iv) mobility in $\{0, 10\}$ m/s followed by Gauss-Markov model. With respect to these four dimensions, we simulated networks with nodes distributed over a rectangular region using a uniform random distribution placement model. 
Each of our experiment configurations consists of four trials of 1 minute. All flows concurrently arrive after 5 seconds from the start of each trial. We note that each flow generates one packet at each slot. Each marked point in the lines in Figs.~\ref{fig_metrics_unjamming} and \ref{fig_metrics_jamming} represents an average result taken over the different numbers of flows. 

%The experiments were explored in a 3-dimensional space: network density, number of concurrent flows, and mobility. First, 125 nodes were distributed over a rectangular region using a uniform random distribution placement model with a consideration of densities of $\sqrt[3]{3}$, 2, 3, 4, and 5, where density is defined as the average number of nodes per $R_t^2$ area, with $R_t$ denoting the transmission range, as mentioned previously. Note that densities of 5 or higher yield effectively fog-like networks, whereas others with densities in [2, 4] are foam-like. Density of $\sqrt[3]{3}$ are more prone to yield a network with disconnected components as the number of nodes increases. Second, the number of concurrently arriving flows considered are 1, 3, 7, and 10 with a size of 3 Mbit per flow. Note that each marked point on lines in Fig.~\ref{fig_metrics} represents an average result taken over different number of flow. Finally, static and mobility (10 m/s) scenarios are enabled to further examine the end-to-end communication reliability.

\begin{figure*}[hbt!]
%[p!]
    \centering
    \subfigure[Average Goodput]
    {
        \includegraphics[width=0.39\textwidth]{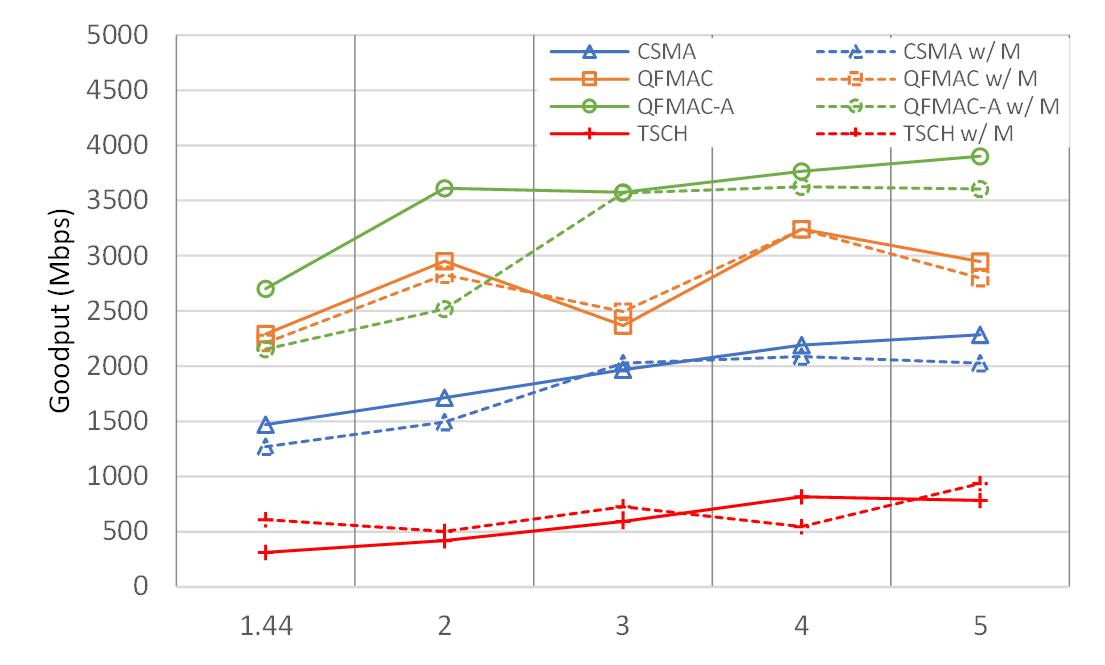}
        \label{fig_goodput_unjamming_size125}
    }
    \subfigure[Average Latency]
    {
        \includegraphics[width=0.39\textwidth]{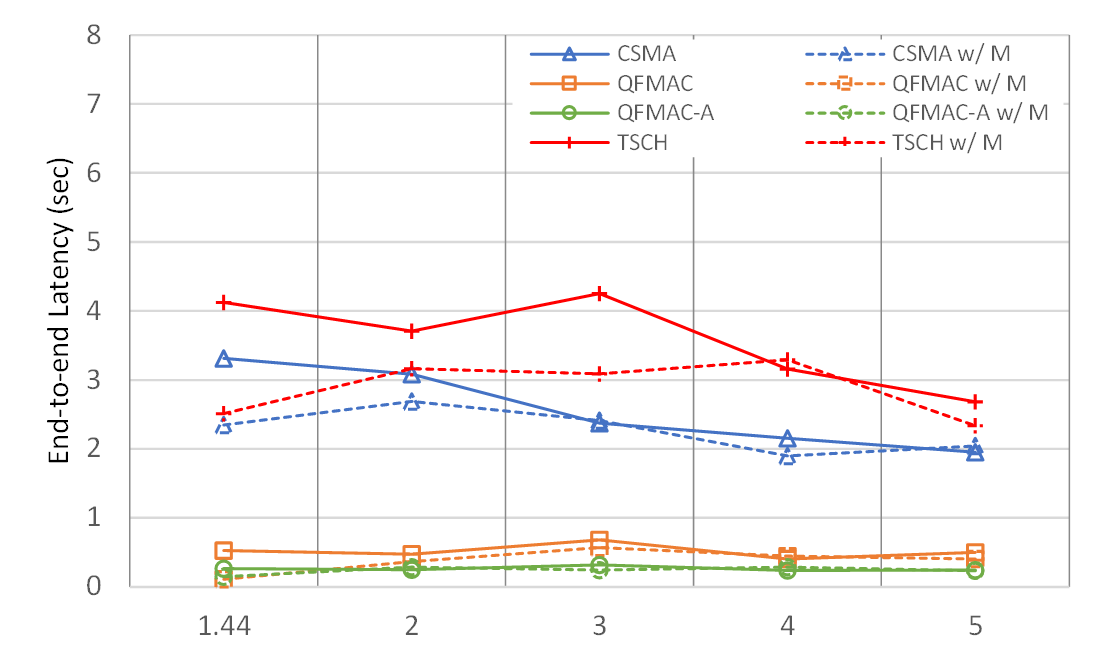}
        \label{fig_latency_unjamming}
    }
    \\
    \subfigure[Average Packet Reception Ratio]
    {
        \includegraphics[width=0.39\textwidth]{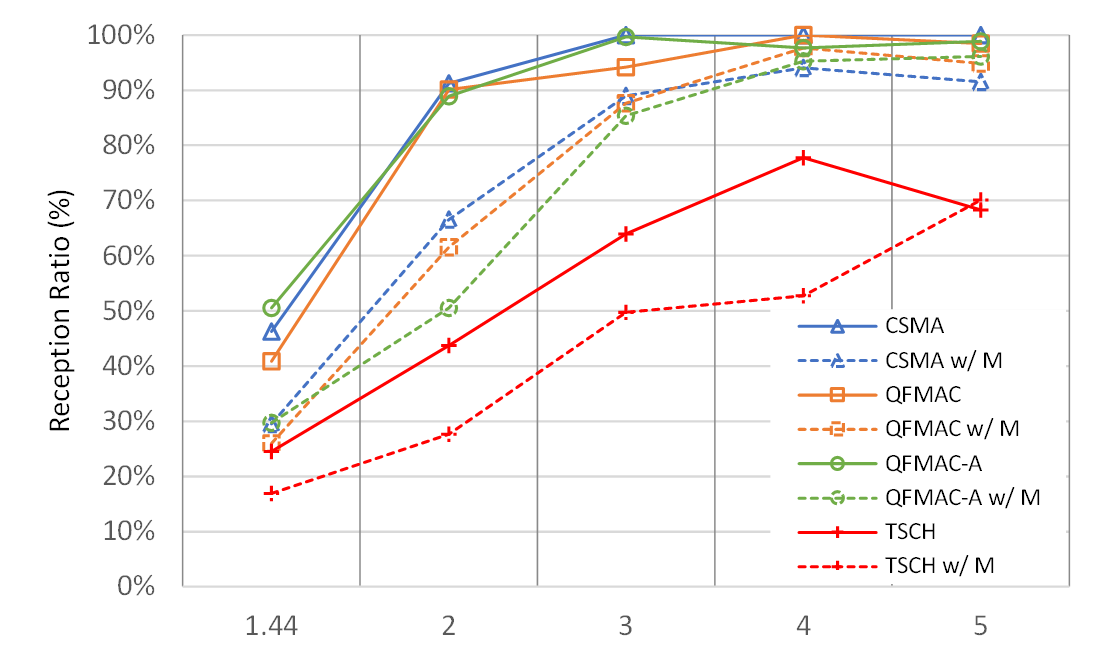}
        \label{fig_rr_unjamming}
    }
    \subfigure[Average Overhead]
    {
        \includegraphics[width=0.39\textwidth]{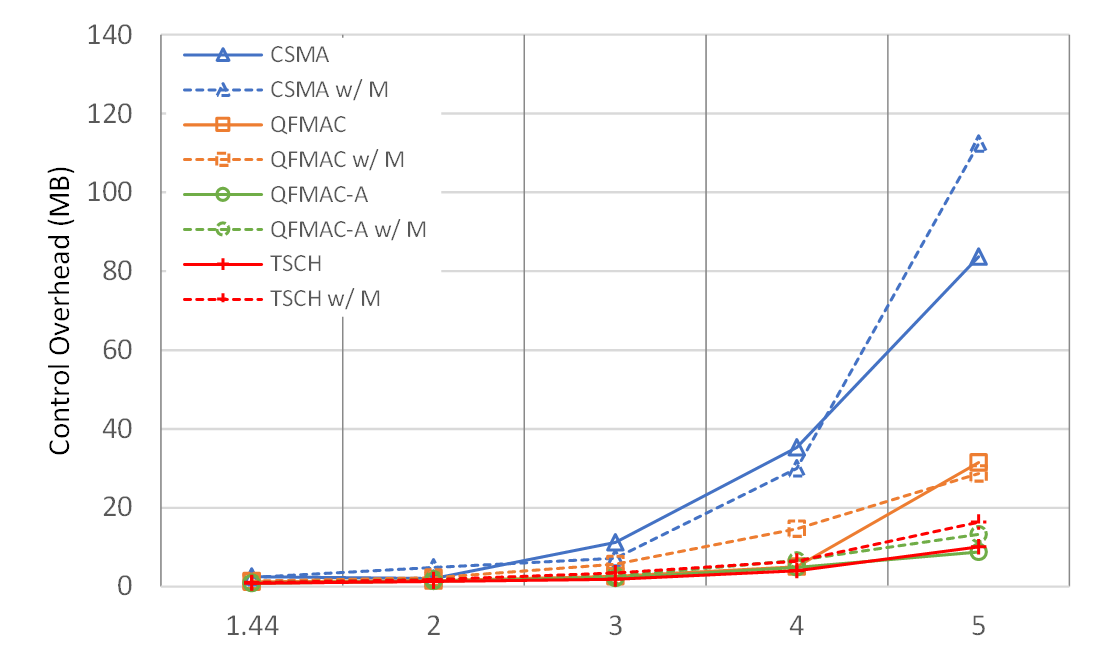}
        \label{fig_overhead_unjamming}
    }
    \caption{Density versus different performance metrics for QuickFire MAC versus CSMA and TSCH at network size of 125. The solid and dashed (denoted with ``w/ M'') lines respectively denote network scenarios that are static and mobile.}
    \label{fig_metrics_unjamming}
\end{figure*}

\begin{figure*}[hbt!]
    \centering
     \subfigure[64 nodes]    
     {
        \includegraphics[width=0.39\textwidth]{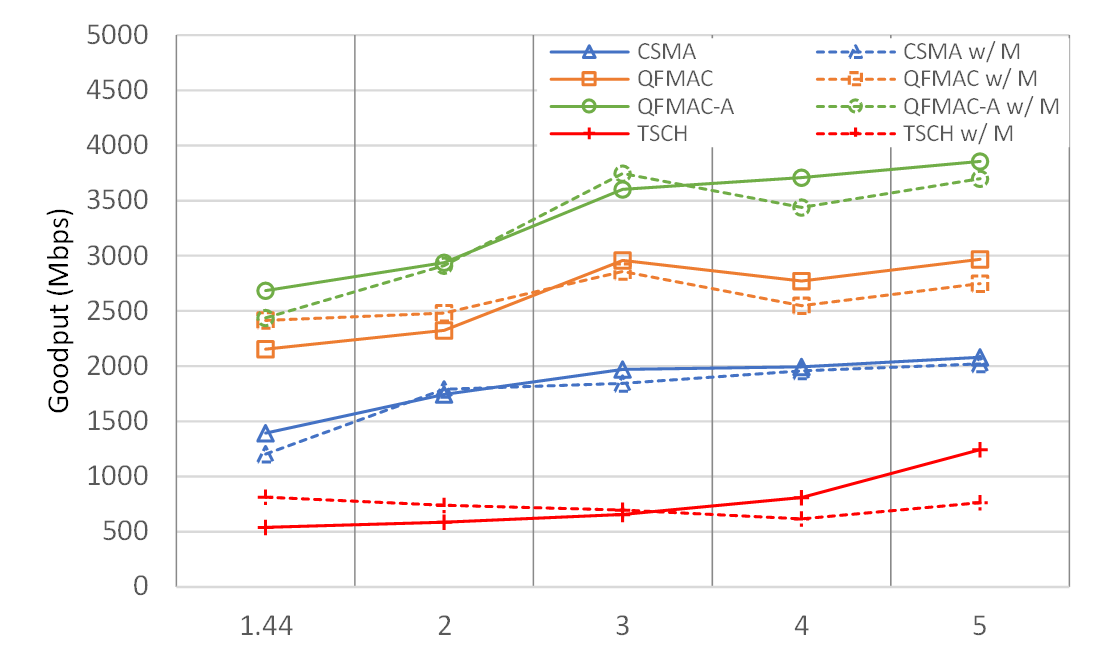}
        \label{fig_goodput_unjamming_size64}
    }
    \subfigure[216 nodes]
    {
        \includegraphics[width=0.39\textwidth]{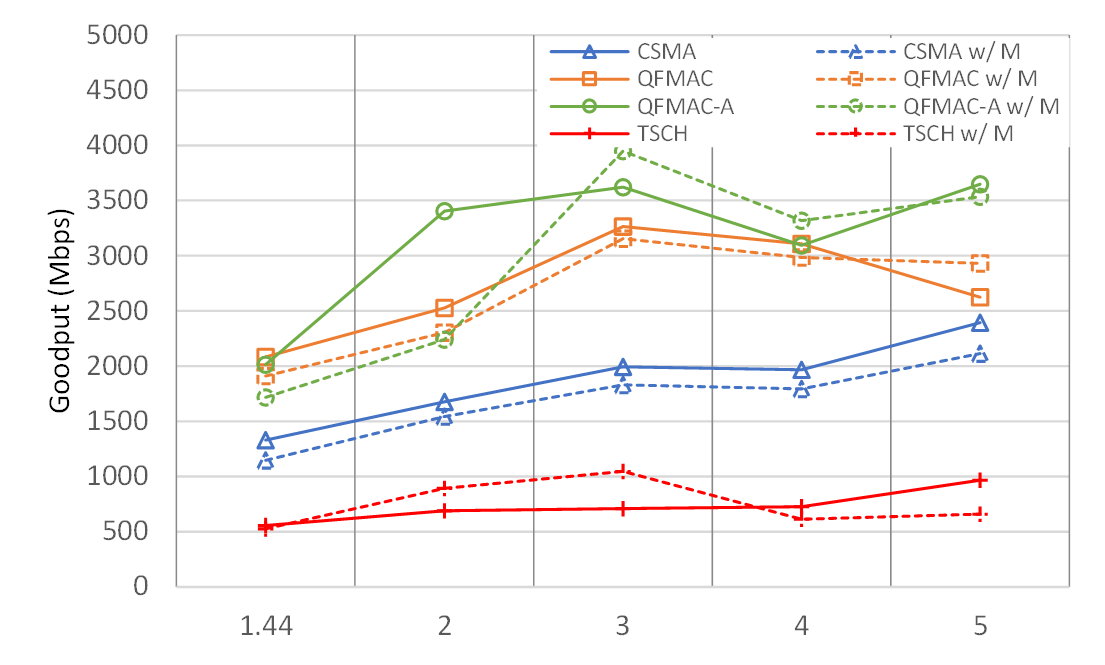}
        \label{fig_goodput_unjamming_size216}
    }
    \caption{Density versus Goodput for QuickFire MAC versus CSMA and TSCH at network sizes of 64 and 216.}
    \label{fig_goodput_unjamming}
\end{figure*}
%\pagebreak

\subsection{Performance in Static and Mobile Networks}
We begin by comparing goodput performance:  Figs.~\ref{fig_goodput_unjamming_size125} and \ref{fig_goodput_unjamming} show that QF-MAC versions substantially outperform TSCH and CSMA over all configurations of density, size, and mobility. CSMA, even operating with two wideband data channels of 3.5Mbps, does not yield high goodput efficiency because of frequent transmission delays of concurrent transmissions, which can be inferred from Fig.~\ref{fig_latency_unjamming}. TSCH has the lowest (and sub-1Mbps) goodput across almost all configurations, largely due to frequent retransmissions during concurrent communications; recall that with a global shared channel sequence, there are continuous collisions over time when $\Phi(i) \! > \! L$. This indicates that TSCH cannot sustain reliable communication in networks when there are many concurrent flows. Notably, QF-MAC with channel adaptation (QF-MAC-A) offers a remarkable improvement over the version without adaptation (QF-MAC) for both static and mobile network scenarios. With respect to network density, from $\sqrt[3]{3}$ to 2, QF-MAC and QF-MAC-A deliver more goodput as the density increases, but the goodput tends to decrease as the density increases 3 or 4 over different network sizes because of increasing impacts of inter-flow interference. We observe that mobility does not always hurt goodput: on occasion, it offers alternative paths that better avoid inter-flow interference, in which case goodput can be higher than in the static case.

With respect to end-to-end latency comparisons, Figs.~\ref{fig_latency_unjamming} likewise show that CSMA and TSCH have lower performance than QF-MAC and QF-MAC-A, again due to the frequent transmission delay of CSMA and the larger number of transmissions retries of TSCH. QF-MAC-A likewise better mitigates the impact of inter-flow interference by swapping out a transmission channel with the worst performance. In packet reception ratio comparisons, Figs.~\ref{fig_rr_unjamming} demonstrates that the use of 1Mbps channels in QF-MAC versions achieves similar reliability compared to the use of 3.5Mbps channels in CSMA. This is due to the effective avoidance of intra-flow interference and the reduction of inter-flow interference in QF-MAC. Again, the low reception ratio of TSCH is caused by frequent collisions among concurrent flows. 
We observe a sizeable gap in packet reception ratio between the static and mobile cases: this is because mobility renders the routing state stale and thus undermines the reliability of path routing.

With respect to control overhead comparisons of the MAC and routing stacks, Fig.~\ref{fig_overhead_unjamming} shows that even though CSMA incurs no MAC communication overhead to configure transmission channels, CSMA yields the largest control overhead across all densities, both with and without mobility. QF-MAC-A performs comparably to TSCH. We note that, in dense networks, arriving flows wait longer to be served as the completion time increases in the flows currently served. This is because each node serves at most four flows in a first-come-first-served fashion. As a result, unserved flows tend to perform a higher number of route exploration rounds in CSMA and TSCH, which also contributes to a higher end-to-end latency than that QF-MAC/QF-MAC-A, as seen in Fig.~\ref{fig_latency_unjamming}. In contrast, QF-MAC-A efficiently leverages the trade-off between the added control overhead for channel adaptation and the goodput efficiency to achieve substantially better goodput and end-to-end latency.. 

%and TSCH has larger control overhead than QF-MAC-A once the density crosses 3

\begin{figure*}[hbt!]
    \centering
    \subfigure[Average Goodput]
    {
        \includegraphics[width=0.39\textwidth]{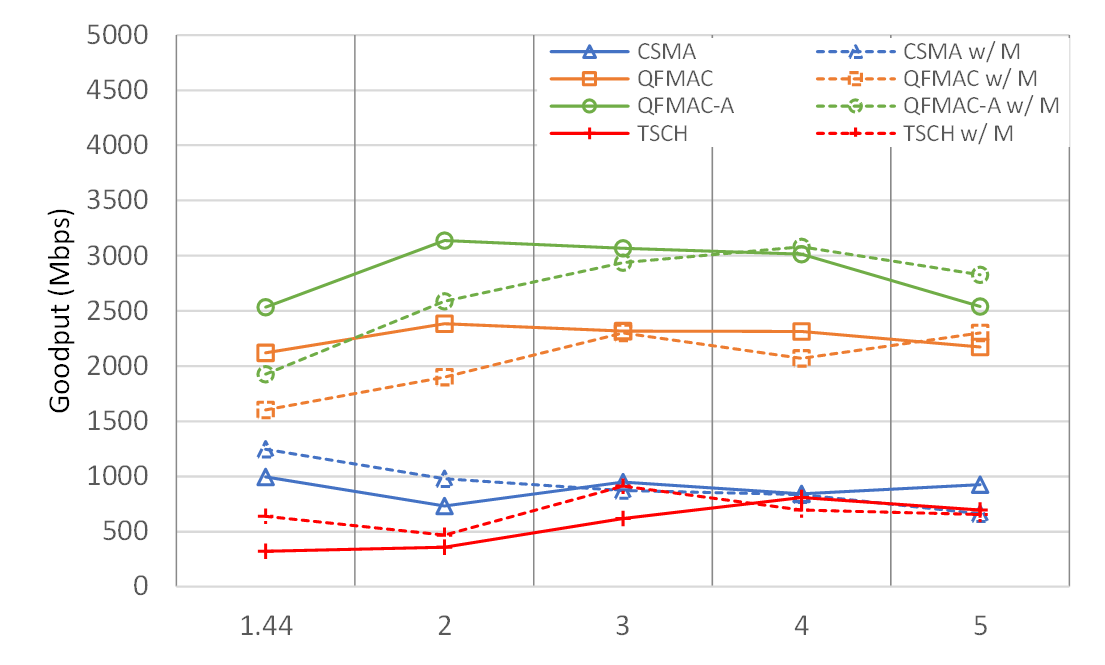}
        \label{fig_goodput_jamming_size125}
    }
    \subfigure[Average Latency]
    {
        \includegraphics[width=0.39\textwidth]{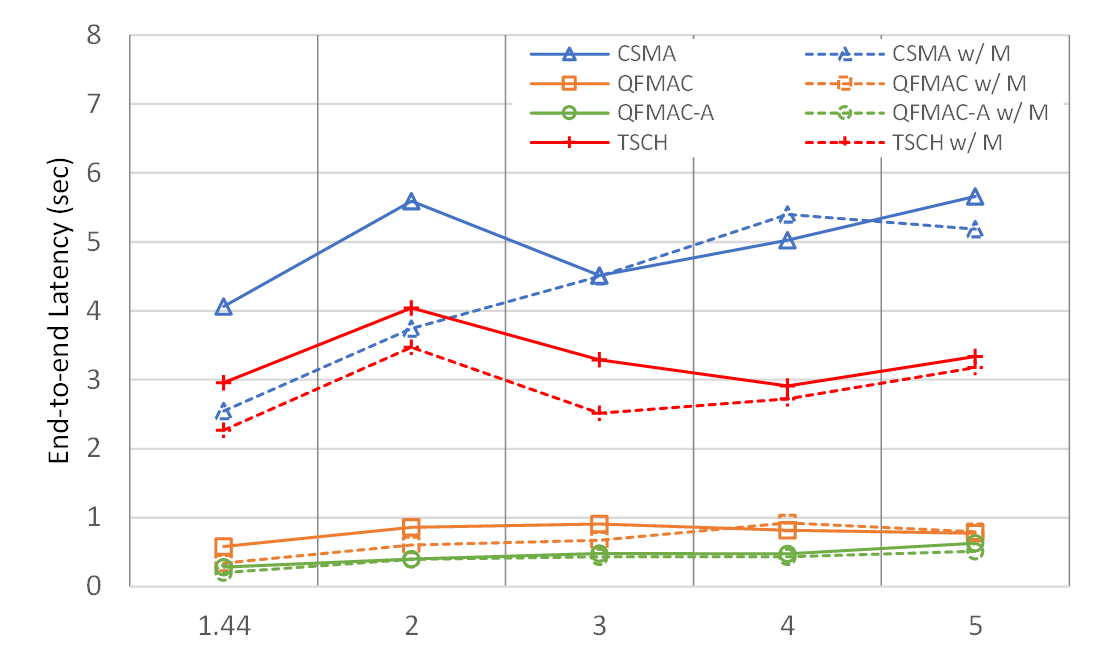}
        \label{fig_latency_jamming}
    }
    \subfigure[Average Packet Reception Ratio]
    {
        \includegraphics[width=0.39\textwidth]{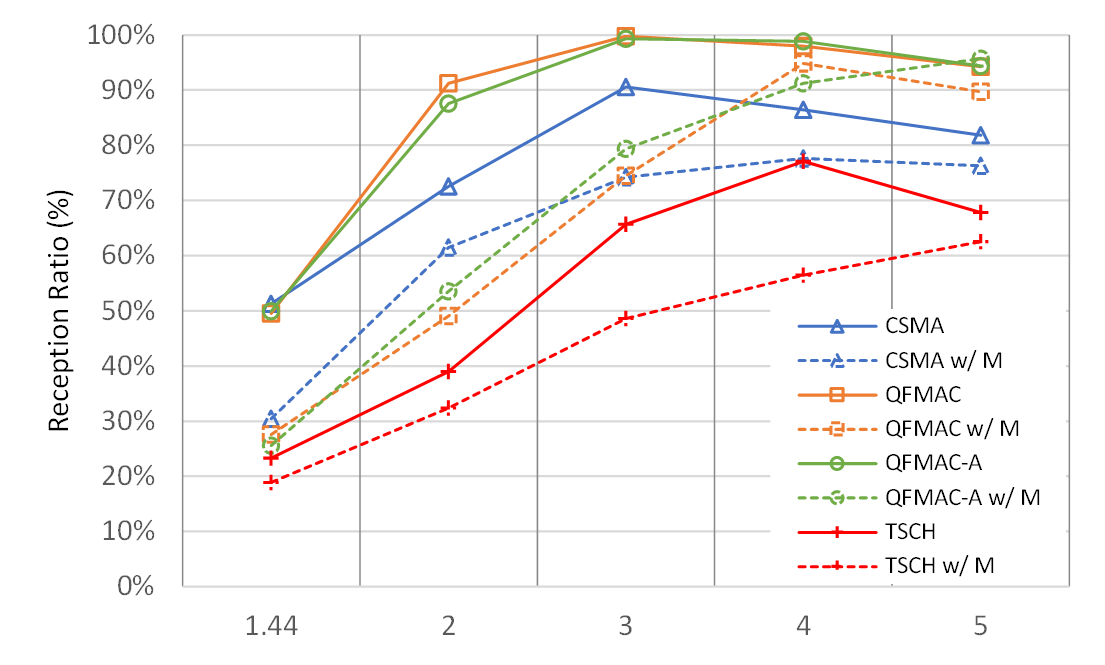}
        \label{fig_rr_jamming}
    }
    \subfigure[Average Overhead]
    {
        \includegraphics[width=0.39\textwidth]{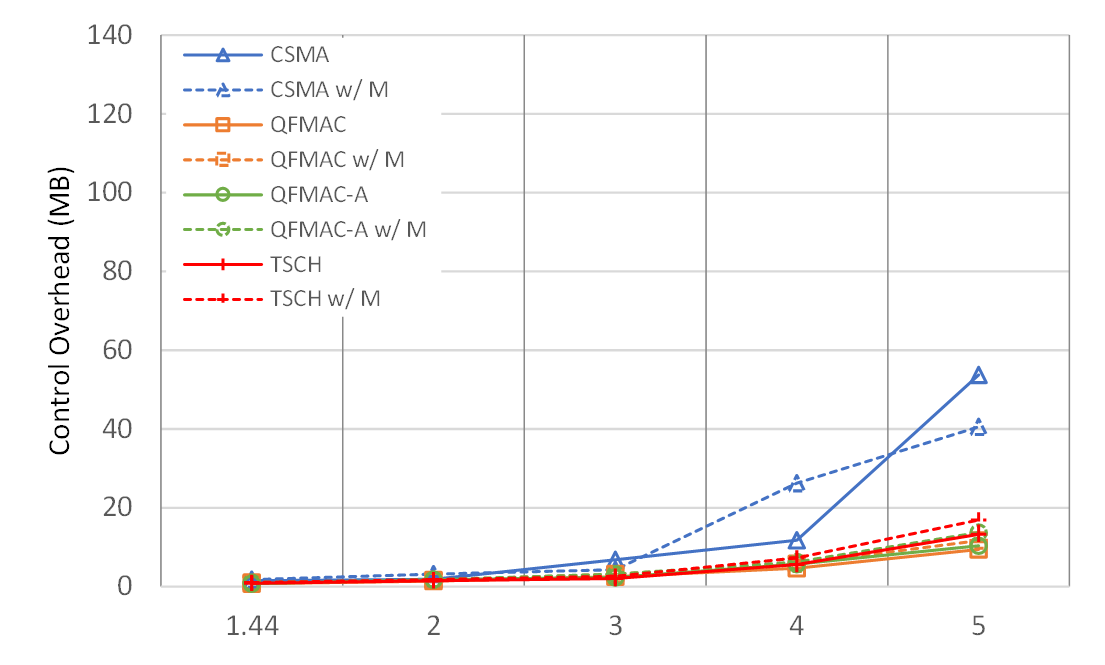}
        \label{fig_overhead_jamming}
    }
    \caption{Density versus different performance metrics for QuickFire MAC versus CSMA and TSCH at network size of 125 in the presence of adversarial jamming.}
    \label{fig_metrics_jamming}
\end{figure*}

\begin{figure*}[hbt!]
    \centering
     \subfigure[64 nodes]    
     {
        \includegraphics[width=0.39\textwidth]{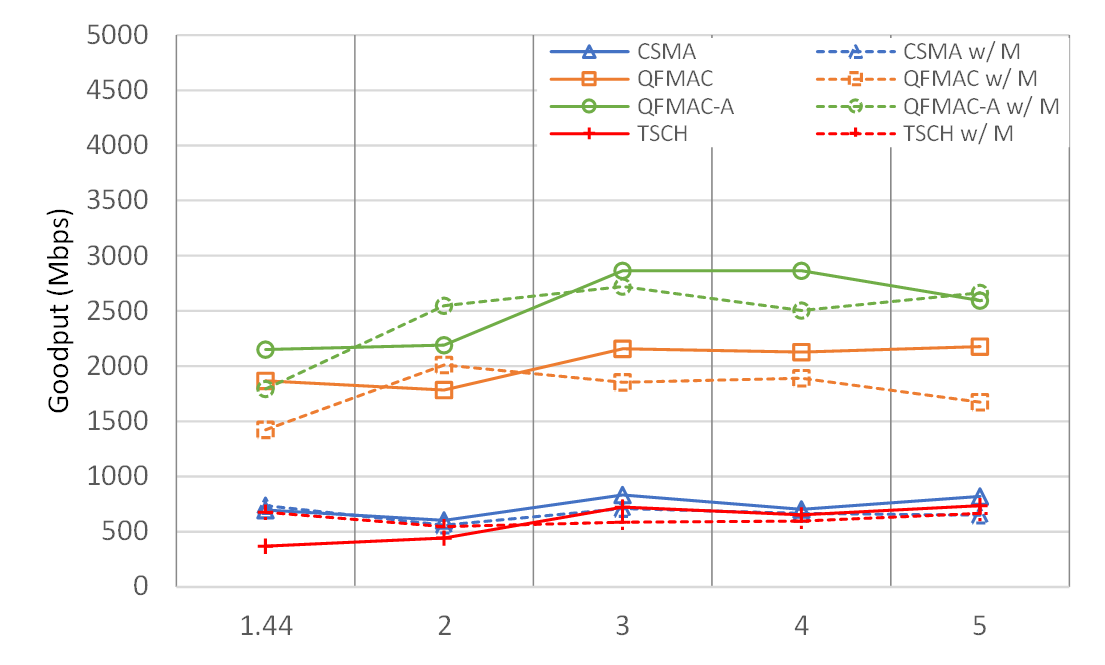}
        \label{fig_goodput_jamming_size64}
    }
    \subfigure[216 nodes]
    {
        \includegraphics[width=0.39\textwidth]{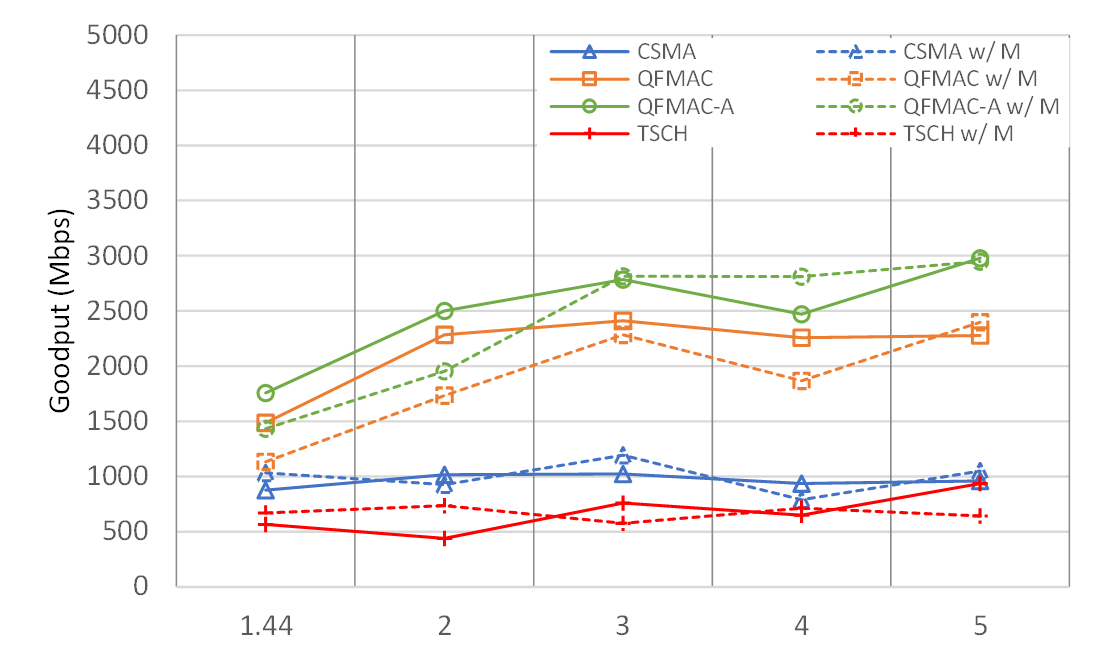}
        \label{fig_goodput_jamming_size216}
    }
    \caption{Density versus Goodput for QuickFire MAC versus CSMA and TSCH at different network sizes of 64 and 216 with adversarial jamming.}
    \label{fig_goodput_jamming}
\end{figure*}
%\pagebreak

\subsection{Performance with respect to Adversarial Jamming}
Recall that the adversarial jammer reduces the capacity of the network within its interference region by 2Mbps, in QF-MAC and TSCH, jamming two channels that have the highest aggregate traffic and, in CSMA, one of the two data channels. Figs.~\ref{fig_goodput_jamming_size125} and \ref{fig_goodput_jamming} show that QF-MAC versions effectively eschew the jammed channels and substantially outperform CSMA and TSCH across varying densities, sizes, and mobility speeds. Observe that QF-MAC and QF-MAC-A have respectively {\footnotesize ${\sim}$}20\% and {\footnotesize ${\sim}$}25\% goodput reduction, whereas CSMA has a {\footnotesize ${\sim}$}52\% drop of goodput. This implies that both the use of local channel sequences and the use of channel adaptation help mitigate the loss of achievable capacity in the presence of adversarial jamming. TSCH, with a ~10\% goodput reduction, again shows poor performance of sub-1Mbps because of its lack of adaptivity to internal and external interference.

In end-to-end latency comparisons, Fig.~\ref{fig_latency_jamming} shows that CSMA and TSCH have a much higher latency than QF-MAC and QF-MAC-A in the presence of jamming. In contrast, QF-MAC versions achieve a sub-1 second latency over all configurations due to the efficient channel utilization over the channels free from jamming. 

Regarding reception ratio comparisons, Figs.~\ref{fig_rr_unjamming} and \ref{fig_rr_jamming} show that QF-MAC and QF-MAC-A are still able to sustain reception ratio under adversarial jamming. CSMA suffers an obvious reliability decrease of over 10\% once the density crosses 3, with and without mobility. Since the effect of inter-flow interference from concurrent flows yields numerous retransmissions in TSCH, it again achieves a poor reception radio in the presence of external jamming. With respect to control overhead comparisons, Fig.~\ref{fig_overhead_jamming} shows essentially the same overhead comparison as in the case without adversarial jamming.\footnote{Surprisingly, the control overhead of QF-MAC-A and CSMA decreases when jamming is introduced, especially once the density increases to 3 or more, but this is due to a subtle artifact of the routing protocol, which chooses the minimum interference path in the data plane and essentially finds paths that circumvent the jammed region of the network. A detailed explanation of this reduction is outside the scope of this paper, as this pattern is not seen for other routing protocols.}

\section{Conclusion}
By leveraging a local channel hopping schedule for multi-hop communication, QF-MAC adapts across diverse network scenarios (e.g., of density, concurrency, mobility, etc.) and is robust to internal and external interference in multi-radio networks. In contrast, TSCH-based protocols, which rely on a link offset schedule to stagger a global channel hopping sequence, are prone to more collisions or otherwise incur goodput-latency performance penalties. Similarly, CSMA-based protocols, albeit more robust to collisions than TSCH, underperform in goodput-latency and thus have relatively high control overhead as well. By allowing nodes to leverage local interference estimation to further adapt their channels, QF-MAC achieves a gain in channel efficiency with low control overhead, even in the presence of high interference dynamics. QF-MAC works with on-demand routing over single or multiple paths and is also compatible with table-driven routing protocols. 

In future work, we plan to study how to take advantage of deep learning and reinforcement learning to perform accurate interference prediction with lightweight sampling overhead locally.  We will also consider techniques to deal with the jamming of the control channel itself and further explore multi-radio goodput efficiency enhancement.
%\newpage
%\pagebreak

\bibliographystyle{IEEEtran}
\bibliography{myref}

\newpage

\appendices

\section{Analysis of Configuration Space and Performance in Channel Hopping Sequences}
\label{FirstAppendix}

This section analyzes the configuration space of the network and scheduling parameters and compares the performance across different scheduling of channel hopping sequences. The objective of the analysis is to understand the vulnerability of TSCH and explore the feasibility of other scheduling schemes to improve the average throughput in consideration of channel collision across radios associated with the same or different flows within an interference region. We again note that, in an interference region, the euclidean distance between any two radios is not greater than both the interference radius $R_i$ and the transmission radius $R_t$. Therefore, any two radios are able to interfere with each other during data transmission.

\subsection{Configuration space of channel hopping scheduling}
We let $U$ denote the number of all channels and $M$ denote the number of radios in a given interference region. Note that $U \geq M$ in Eq.~\ref{SEQ_LEN} so that all radios get a chance to use disjoint channel at a slot. Within this range, $F$ active flows deliver data packets at each slot. Since the network enables concurrent transmissions, each radio associated with a flow keeps sending packets at each slot. Without loss of generality, we let $K = \frac{M}{F}$ represent the number of radios in a flow within the interference region. $L$ defines the length of a cyclic sequence of frequencies. Note that each frequency disjoints in a sequence if $L \leq U$. When $L > U$, the sequence is broken down into $\lceil \frac{L}{U} \rceil$ segments, where up to $U$ frequencies are disjoint.

\subsection{Scheduling of channel hopping sequences}
\textbf{Random channel hopping sequence.} Each radio locally generates a cyclic sequence of frequencies without coordination. Channel collision can happen for any two radios associated with either the same flow or different flows.

\textbf{Global channel hopping sequence.} TSCH applies a globally fixed cyclic sequence of frequencies shared by all radios. It relies on channel offset schedules to avoid channel collision among different transmitting radios within an interference region.

\textbf{Fixed per-flow channel hopping sequence.} All radios in a flow share a cyclic sequence of frequencies. The sequence is randomly chosen from the set of channels at a flow setup. Within each flow, the channel offset of the next forwarder is right-shifted by one to avoid intra-flow channel collision.

\subsection{Analysis of collisions/slot}
In order to understand the survivability of sequence schedulers in a network without additional control overhead across data flows, all sequence schedulers are performed without inter-flow coordination. Thus, inter-flow channel collision may occur because a radio is unaware of the radios' sequence information of other flows. On the other hand, intra-flow channel collision can be avoided by simply shifting the channel offset by one position. This offset shifting can be done at a flow setup, which is to select a routing path and assign the transmission/reception channel sequences for a flow.

In a scheduler with random sequences, the sequence assignment at each radio is independent. Thus, both intra-flow and inter-flow channel collisions may occur. At radio $v$, the probability of not getting collided with one other radio $u$ at a slot is $1 - \frac{1}{U}$. Since there are $M-1$ radios potentially interfering with the channel used by radio $v$, the probability of collision per slot at a radio is calculated as follows:

\begin{equation}
\begin{array}{l}
\label{collision_probability_random}
P_{random} = 1 - (\frac{U-1}{U})^{M-1}
\end{array}
\end{equation}

In a scheduler with a global channel hopping sequence, the offset is randomly chosen at the source node at the flow setup. Then the offset is added by one whenever the flow control message is propagated to the next hop in a downstream direction. For each flow with radios $v_1, ..., v_K$, there are $K$ disjoint channels selected for $v_1, ..., v_K$ at a slot. Given a selected channel associated with radio $v_{x_1} \in$ flow $f_x$, the probability that all radios associated with other flow $f_y$ cause no conflict is $\frac{L-K}{L}$. Since there are $F-1$ flows potentially interfering with the channel used by radio $v_{x_1}$, the probability of collision per slot at a radio is calculated as follows:

\begin{equation}
\begin{array}{l}
\label{collision_probability_global}
P_{global} = 1 - (\frac{L - K}{L})^{F-1}
\end{array}
\end{equation}

In the scheduler with a fixed per-flow channel hopping sequence, similar to the global channel hopping sequence, the offset is assigned within a flow in a downstream direction to avoid intra-flow channel collision. However, compared to a global channel hopping sequence with only $L$ channels used in the interference range, the scheduler allows each flow randomly to select $L$ channels from the set of all channels. Therefore, given a selected channel associated with radio $v_{x_1} \in$ flow $f_x$, the probability that all radios associated with other flow $f_y$ cause no conflict is $\frac{U-K}{U}$. Since there are $F-1$ flows potentially interfering with the channel used by radio $v_{x_1}$, the probability of collision per slot at a radio is calculated as follows:

\begin{equation}
\begin{array}{l}
\label{collision_probability_fixed_per_flow}
P_{per flow} = 1 - (\frac{U - K}{U})^{F-1}
\end{array}
\end{equation}
Note that $P_{per flow}$ is independent of $L$. Thus, only a short sequence is needed in the scheduling of fixed per-flow channel hopping sequences. This feature improves the flexibility and diversity of chosen channels in each flow and thus saves the payload size in control messages of flow setup.

\begin{lemma} 
% In an interference region with concurrent flows, where $M \geq F > 1$ and $K \geq 1$, scheduling fixed per-flow channel hopping sequences achieves a performance not worse than other scheduling of channel hopping sequences: $P_{per flow} \leq P_{random}$ and $P_{per flow} \leq P_{global}$.
In an interference region with concurrent flows, where $U \geq M \geq F > 1$ and $\frac{M}{F} = K \geq 1$, scheduling fixed per-flow channel hopping sequences achieves a performance not worse than the one in scheduling global or random channel hopping sequences: $P_{per flow} \leq P_{random}$ and $P_{per flow} \leq P_{global}$.
\end{lemma}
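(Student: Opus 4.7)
The plan is to dispatch the two inequalities separately. The inequality $P_{per flow} \leq P_{global}$ is the easy one: both probabilities have a common exponent $F-1$, so the claim reduces to $(U-K)/U \geq (L-K)/L$, which cross-multiplies to $L \leq U$---precisely the standing hypothesis from Eq.~\ref{SEQ_LEN}.

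The substantive work lies in $P_{per flow} \leq P_{random}$. After substituting $M = KF$ and taking logarithms, this is equivalent to showing
\[
\Lambda(F) \;:=\; (F-1)\log(U-K) + F(K-1)\log U - (KF-1)\log(U-1) \;\geq\; 0
\]
for every integer $F$ in the admissible range $2 \leq F \leq U/K$. My strategy is to view $\Lambda$ as a continuous function of a real variable and establish (a) that $\Lambda$ is non-increasing in $F$, and (b) that $\Lambda(U/K)\geq 0$ at the right endpoint; monotonicity then propagates the non-negativity downward, yielding $\Lambda(F) \geq \Lambda(\lfloor U/K\rfloor) \geq \Lambda(U/K) \geq 0$. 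Step (a) is a short Bernoulli calculation: $\Lambda'(F) = \log\bigl((U-K)U^{K-1}/(U-1)^K\bigr)$, and the standard bound $(1-1/U)^K \geq 1 - K/U$ rearranges to $(U-1)^K \geq (U-K)U^{K-1}$, making the argument of the log at most $1$.

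The part I expect to be the main obstacle is step (b). The key observation that unlocks it is that at $F = U/K$ the coefficient $KF-1$ collapses to $U-1$, so after clearing a factor of $K$ the bound takes the clean form
\[
\psi(U-K) + (K-1)\,\psi(U) \;\geq\; K\,\psi(U-1), \qquad \psi(x) := x\log x.
\]
Since $\psi$ is convex on $(0,\infty)$ and one verifies directly that $U-1 = \tfrac{1}{K}(U-K) + \tfrac{K-1}{K}\,U$ is a genuine convex combination (the weights sum to $1$ and the weighted average evaluates correctly), Jensen applied to $\psi$ at this decomposition yields precisely the displayed inequality. The degenerate case $K=1$ gives equality throughout in both inequalities and can be checked directly as a sanity check.
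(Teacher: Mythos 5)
Your proposal is correct, and for the substantive inequality $P_{per flow} \leq P_{random}$ it takes a genuinely different route from the paper. (The $P_{per flow} \leq P_{global}$ part is identical in both: $L \leq U$ gives $\frac{U-K}{U} \geq \frac{L-K}{L}$ with a common exponent $F-1$.) The paper fixes $U$ and $F$, varies $M$, and splits into three cases ($M=F$, $U=M$, $U>M>F$); in the main case it rewrites the claim as $(1-\frac{1}{U})^{\frac{M-1}{F-1}} < 1-\frac{M}{UF}$ and argues by counting intersections of the exponential $f(M)$ with the line $g(M)$: equality at $M=F$, strict inequality at the endpoint $M=U$ (established via the monotonicity of $(1-\frac{1}{x})^{x-1}$), and at most one further crossing, which cannot land in $(F,U)$. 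You instead fix $U$ and $K$, vary $F$, and fold everything into the single function $\Lambda(F)$, proving $\Lambda' \leq 0$ by Bernoulli and $\Lambda(U/K) \geq 0$ by Jensen applied to $\psi(x)=x\log x$ at the exact convex combination $U-1 = \frac{1}{K}(U-K) + \frac{K-1}{K}U$. Structurally the two arguments are parallel --- both reduce to an endpoint check at $M=U$ plus an interpolation step --- but yours buys a cleaner execution: it subsumes the paper's three cases in one computation, and it replaces the intersection-counting step (which the paper justifies somewhat informally, invoking that $f$ is ``decreasing'' where convexity of $f-g$ is what actually bounds the number of crossings) with two standard named inequalities. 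The paper's version is more geometric; yours is more self-contained and easier to audit. One cosmetic remark: the floor in $\Lambda(F) \geq \Lambda(\lfloor U/K\rfloor) \geq \Lambda(U/K)$ is unnecessary since $\Lambda$ is monotone on the reals, and you should note explicitly that $U-K \geq K(F-1) \geq 1$ so all logarithms are defined; neither affects correctness.
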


\begin{proof}

In the comparison of Eq.~\ref{collision_probability_global} and Eq.~\ref{collision_probability_fixed_per_flow}, $P_{per flow}$ is not greater than $P_{global}$ because of $U \geq L$ and $\frac{U - K}{U} \geq \frac{L - K}{L}$. Namely, scheduling fixed per-flow channel hopping sequences yields an average throughput not worse than a global channel hopping sequence within an interference region.

% In the comparison of Eq.~\ref{collision_probability_random} and Eq.~\ref{collision_probability_fixed_per_flow}, if $F = M $ and $K = \frac{M}{F} = 1$, $P_{random}$ and $P_{per flow}$ will be reduced to the same value. As $F$ increases to be larger than 1 (and then $K = \frac{M}{F} > 1$), we show that $P_{per flow} < P_{random}$ is hold as follows:

In the comparison of Eq.~\ref{collision_probability_random} and Eq.~\ref{collision_probability_fixed_per_flow}, to show that $(\frac{U - 1}{U})^{M-1} \leq  (\frac{U - K}{U})^{F-1}$,we consider three cases: (i) $U \geq M = F$, (ii) $U = M \geq F$, (iii) $U > M > F$.

\textbf{Case i}: $U \geq M = F, K=1$. we have $
(\frac{U - 1}{U})^{M-1} = (\frac{U - K}{U})^{F-1}$

\textbf{Case ii}: $U = M \geq F$. we have$
(\frac{U - 1}{U})^{M-1} =  (\frac{U - 1}{U})^{U-1}$ and
$(\frac{U - K}{U})^{F-1} = (\frac{U - \frac{U}{F}}{U})^{F-1} = (\frac{F-1}{F})^{F-1}$. Because $(\frac{x - 1}{x})^{x-1}$ is decreasing when $x \geq 1$, $(\frac{U - 1}{U})^{U-1} < (\frac{F-1}{F})^{F-1}$ holds if $U > F \geq 1$. Therefore, we have $(\frac{U - 1}{U})^{M-1} < (\frac{U - K}{U})^{F-1}$.

\textbf{Case iii}: $U > M > F$. We first show that the inequality of $(\frac{U - 1}{U})^{M-1} < (\frac{U - K}{U})^{F-1}$ can be derived to the following inequality:

\begin{equation}
\begin{array}{l}
\label{inequality_1}
\ \ \ \ \ \ \ (\frac{U - 1}{U})^{M-1} < (\frac{U - K}{U})^{F-1} \\
\iff (1-\frac{1}{U})^{\frac{M-1}{F-1}} < 1- \frac{K}{U} = 1- \frac{M}{UF}
\end{array}
\end{equation}

\begin{equation*}
\begin{array}{l}
\ \ \ \ \ \ \ 0 < (\frac{U - 1}{U})^{M-1} < (\frac{U - K}{U})^{F-1} < 1 \\
\iff \log{(\frac{U - 1}{U})^{M-1}} < \log{(\frac{U - K}{U})^{F-1}} \\
\iff (M-1)(\log{(U-1)} - \log{U}) < \\
\ \ \ \ \ \ \ (F-1)(\log{(U-K)} - \log{U}) \\
\iff \frac{M-1}{F-1}(\log{(U-1)} - \log{U}) < log{(U-K)} - \log{U} \\
\iff \frac{M-1}{F-1} > \frac{\log{(U-K)} - \log{U}}{\log{(U-1)} - \log{U}} \\
\iff \frac{M-1}{F-1} > \frac{\log{U} - \log{(U-K)}}{\log{U} - \log{(U-1)}}\\

\iff \frac{M-1}{F-1} > \frac{\log{U} - (\log{U} + \log{(1 - \frac{K}{U}}))}{\log{U} - (\log{U} + \log{(1 - \frac{1}{U}}))} \\
\iff \frac{M-1}{F-1} > \frac{\log{(1 - \frac{K}{U})}}{\log{(1 - \frac{1}{U})}} \\
\iff \frac{M-1}{F-1} \log{(1 - \frac{1}{U})} < \log{(1 - \frac{K}{U})} \\
\iff (1 - \frac{1}{U})^{\frac{M-1}{F-1}} < 1 - \frac{K}{U} = 1- \frac{M}{UF} \\
\end{array}
\end{equation*}

Note that the exponential function of $f(M) = (1 - \frac{1}{U})^{\frac{M-1}{F-1}}$ is decreasing when $U \geq M \geq F$ and the exponential function of $f(M) = (1 - \frac{1}{U})^{\frac{M-1}{F-1}}$ and the linear function of $g(M) = 1- \frac{M}{UF}$ has at most two intersections, where one of them occurs at $ M = F$. Also, we know that $(\frac{U - 1}{U})^{M-1} < (\frac{U - \frac{M}{F}}{U})^{F-1}$ and $(1 - \frac{1}{U})^{\frac{M-1}{F-1}} < 1- \frac{M}{UF}$ hold at $U = M$, according to the derivation in Case ii and Eq.~\ref{inequality_1}. Based on the two statements that $(1 - \frac{1}{U})^{\frac{M-1}{F-1}} < 1- \frac{M}{UF}$ at $U = M$ and $(1 - \frac{1}{U})^{\frac{M-1}{F-1}} = 1- \frac{M}{UF}$ at $M = F$, we can derive that $(1 - \frac{1}{U})^{\frac{M-1}{F-1}} < 1- \frac{M}{UF}$, for $M \in (F, U]$. This is because that $f(M) = (1 - \frac{1}{U})^{\frac{M-1}{F-1}}$ is decreasing and the possibly remaining one intersection of $f(M)$ and $g(M)$ will not occur in the range of $(F, U)$. (Otherwise, $(1 - \frac{1}{U})^{\frac{M-1}{F-1}} > 1- \frac{M}{UF}$.) Hence, $(\frac{U - 1}{U})^{M-1} < (\frac{U - K}{U})^{F-1}$, for $U > M > F$, is proved.  

According to the three cases above, $(\frac{U - 1}{U})^{M-1} \leq  (\frac{U - K}{U})^{F-1}$, for $U \geq M \geq F > 1$, and then $P_{per flow} \leq P_{random}$  is proved.

\end{proof}

\end{document}